\def\venue{arxiv}
\newcommand{\omitted}[1]{\ifthenelse{\equal{\venue}{arxiv}}{#1}{ The
    proof can be found in Ref.~\cite{BHH09} and is omitted from this
    extended abstract.}}
\newcommand{\be}{\begin{equation}}
\newcommand{\ee}{\end{equation}}
\newcommand{\ba}{\begin{array}}
\newcommand{\ea}{\end{array}}
\newcommand{\bea}{\begin{eqnarray}}
\newcommand{\eea}{\end{eqnarray}}
\def\bal#1\eal{\begin{align}#1\end{align}}
\def\bals#1\eals{\begin{align*}#1\end{align*}}
\DeclareMathOperator{\poly}{poly}
\DeclareMathOperator{\Var}{Var}
\def\bbE{\mathbb{E}}
\def\eps{\epsilon}
\def\l{\left}
\def\r{\right}
\newcommand{\calA}{{\cal A }}
\newcommand{\calD}{{\cal D }}
\newcommand{\calM}{{\cal M }}
\newcommand{\calO}{{\cal O }}
\newcommand{\CC}{\mathbb{C}}
\newcommand{\EE}{\mathbb{E}}
\newcommand{\la}{\langle}
\newcommand{\ra}{\rangle}
\newcommand{\prob}[1]{{\mathrm{Pr}}\left[#1\right]}
\newcommand{\expect}[1]{\mathbb{E}\left(#1\right)}
\newcommand{\EstProb}[3]{\mbox{\bf EstProb}({#1},{#2},{#3})}
\newcommand{\EstDist}[4]{\mbox{\bf EstDist}({#1},{#2},{#3},{#4})}
\newcommand{\UTest}[4]{\mbox{\bf UTest}({#1},{#2},{#3},{#4})}
\newcommand{\OTest}[4]{\mbox{\bf OTest}({#1},{#2},{#3},{#4})}
\newcommand{\bg}{\mathrm{Big}}
\newcommand{\wbg}{w_{\mathrm{big}}}
\newtheorem{dfn}{Definition}
\newtheorem{lemma}{Lemma}
\newtheorem{prop}{Proposition}
\newtheorem{theorem}{Theorem}
\newtheorem{problem}{Problem}
\newtheorem{corol}{Corollary}
\def\eq#1{(\ref{eq:#1})}
\def\lemref#1{Lemma~\ref{lem:#1}}
\title{Quantum algorithms for testing properties of distributions}
\author{Sergey Bravyi\footnote{IBM Watson Research Center, Yorktown Heights, NY 10598 (USA); {\em email:}
{\tt sbravyi@us.ibm.com}}, \ \  Aram W. Harrow\footnote{Department of
Mathematics, University of
Bristol, Bristol, U.K., {\em email:} {\tt a.harrow@bris.ac.uk}},  \ \ and  \ \
 Avinatan Hassidim\footnote{MIT, Cambridge, MA 02139 (USA);
{\em email:} avinatanh@gmail.com}}
\begin{document}

\maketitle

\begin{abstract}
Suppose one has access to oracles generating samples from two unknown probability distributions $p$ and $q$ on some $N$-element set.   How many samples does one need to test whether the two distributions are close or far from each other in the $L_1$-norm? This and related questions have been extensively studied during the last years in the field of property testing.
In the present paper we study  quantum algorithms for testing properties of distributions.
It is shown that  the $L_1$-distance $\|p-q\|_1$ can be estimated with a constant precision using only $O(N^{1/2})$ queries
in the quantum settings, whereas classical computers  need $\Omega(N^{1-o(1)})$ queries. We also describe quantum algorithms
for testing Uniformity and Orthogonality with query complexity $O(N^{1/3})$. The classical query complexity of these
problems is known to be $\Omega(N^{1/2})$.
A quantum algorithm for testing Uniformity has been recently independently discovered
by Chakraborty et al~\cite{CFMW09}.

\end{abstract}

%\tableofcontents

%\newpage

\section{Introduction}
\label{sec:intro}
\subsection{Problem statement and main results}
Suppose one has access to a black box generating independent samples from an
unknown probability distribution $p$
on some $N$-element set. If the number of available samples grows linearly with $N$, one can
use the standard Monte Carlo method to
simultaneously estimate the probability $p_i$ of every element
$i=1,\ldots,N$ and thus obtain a good approximation to the entire
distribution $p$.
On the other hand, many important questions that one usually encounters  in  statistical analysis
can be answered using only a {\em sublinear} number of samples.
For example, deciding whether $p$ is close in the $L_1$-norm to another distribution $q$
requires approximately $N^{1/2}$ samples if $q$ is known~\cite{Batu-indep-02} and approximately $N^{2/3}$ samples
if $q$ is also specified by a black-box~\cite{Batu-close-00}. Another example is estimating the Shannon entropy $H(p)=-\sum_i p_i \log_2{p_i}$. It was shown in~\cite{Batu-entropy-05,Val-symmetric-08} that distinguishing whether $H(p)\le a$ or $H(p)\ge b$ requires
approximately $N^{\frac{a}{b}}$ samples. Other examples include
deciding whether $p$ is close to a monotone or a unimodal distribution~\cite{Batu-monotone-04}, and deciding whether a pair of
distributions have disjoint supports~\cite{GR-bipartite-98}.
These and other questions fall into the field of {\em distribution testing}~\cite{Batu-thesis-01,Val-symmetric-08} that studies how many samples one needs
to decide whether an unknown distribution has a certain property or is far from having this property.
The purpose of the present paper is to explore whether quantum computers are capable of solving
distribution testing problems more efficiently.

The black-box sampling model adopted in~\cite{Batu-indep-02,Batu-close-00,Batu-entropy-05,Batu-monotone-04,Batu-thesis-01,Val-symmetric-08} assumes that a tester is presented with a list of samples drawn from an unknown distribution. What does it mean to sample from an unknown distribution in the quantum settings?
Let us start by casting  the black-box sampling model into a form that admits a quantum generalization.
Suppose $p$ is an unknown
distribution on an $N$-element set $[N]\equiv \{1,\ldots,N\}$ and  let $S$ be some specified integer.
We shall assume that $p$ is represented by an {\em oracle}  $O_p\, : \, [S]\to [N]$ such that
a probability $p_i$ of any element $i\in [N]$ is proportional to the number of elements in the pre-image of $i$, that is,
the number of  inputs $s\in [S]$ such that  $O_p(s)=i$.
In other words, one can sample from $p$ by querying the oracle $O_p$ on a random input  $s\in [S]$ drawn from the uniform distribution\footnote{Although in this model probabilities $p_i$ can only take values that are multiples of $1/S$,
choosing sufficiently large $S$ allows one to represent any distribution $p$ with an arbitrarily small error.}.
Note that a tester interacting with an oracle  can potentially be more powerful  due to the
possibility of making adaptive queries which could allow him to learn the internal structure of the oracle
as opposed to the black-box model.
However, it will be shown below (see Lemma~\ref{lemma:adapt} in Section~\ref{sec:lower})
that  the oracle  model
and the black-box model  are in fact equivalent. More precisely, for any fixed $N$ one can always choose
sufficiently large $S$ such that a tester will need the same number of queries in both models.

The oracle model admits a standard quantum generalization. Specifically, we shall transform
the oracle $O_p$ into a reversible form by keeping a copy of the input and writing the output of $O_p$ into
an ancillary register. A quantum oracle generating $p$ is a unitary operator whose action on basis vectors
coincides with the reversible version of $O_p$, as we will explain
further in Section~\ref{sec:prelim}.

The present paper focuses on testing three particular properties of distributions, namely,
{\em Statistical Difference}, {\em Orthogonality}, and {\em Uniformity}.
The corresponding property testing problems are promise problems so that a tester is required
to give a correct answer (with a bounded error probability) only for those instances that satisfy the promise.

\begin{problem}[\bf Testing Uniformity]${}$\\
{\rm \underline{Instance:}} Integers $N,S$, precision $\epsilon>0$. Access to an oracle generating a distribution $p$ on $[N]$. \\
{\rm \underline{Promise:}} \parbox[t]{15cm}{Either $p$ is the uniform distribution or the $L_1$-distance between $p$ and the uniform
distribution is at least $\epsilon$.}\\ \\
Decide which one is the case.
\end{problem}

\begin{problem}[\bf Testing Orthogonality]${}$\\
{\rm \underline{Instance:}} Integers $N,S$, precision $\epsilon>0$. Access to oracles generating distributions $p,q$ on $[N]$. \\
{\rm \underline{Promise:}} \parbox[t]{15cm}{Either $p$ and $q$ are orthogonal or  the $L_1$-distance between $p$ and $q$
is at most $2-\epsilon$.}\\
Decide which one is the case.
\end{problem}

\begin{problem}[\bf Testing Statistical Difference]${}$\\
{\em \underline{Instance:}} \parbox[t]{15cm}{Integers $N,S$, thresholds $0\le a<b\le 2$. Access to oracles generating distributions $p$ and $q$ on $[N]$.} \\
{\em \underline{Promise:}} \parbox[t]{15cm}{Either  $\|p-q\|_1\le a$ or $\|p-q\|_1\ge b$.} \\
Decide which one is the case.
\end{problem}

We assume that the precision $\epsilon$ is bounded from below by a fixed constant independent of $N$, for instance, $\epsilon\ge 1/10$. The same applies to the decision gap $b-a$ for testing Statistical Difference.
Given a function $f(N)$ we shall say that a property is testable in $f(N)$ queries if there exists a testing algorithm
making at most $f(N)$ queries that gives a correct answer with a sufficiently high probability (say $2/3$)
for any distributions $p$, $q$ satisfying the promise and for any oracles\footnote{Note that
 according to this definition a tester needs at most $f(N)$ queries even in the limit
$S\to \infty$. } specifying $p$ and $q$.
If a promise is violated, a tester can give an arbitrary answer.

Our main results are the following theorems.
\begin{theorem}
\label{thm:statdif}
Statistical Difference is testable on a quantum computer in $O(N^{1/2})$ queries.
\end{theorem}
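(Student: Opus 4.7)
The plan is to estimate $\|p-q\|_1$ by combining quantum amplitude estimation of collision statistics with a heavy/light partition of $[N]$. A single query to the oracle $U_p:|s\rangle|0\rangle\mapsto|s\rangle|O_p(s)\rangle$ prepares the purification $|\psi_p\rangle=S^{-1/2}\sum_s|s\rangle|O_p(s)\rangle$, whose reduced state on the $N$-register is $\rho_p=\sum_ip_i|i\rangle\langle i|$ (similarly $|\psi_q\rangle,\rho_q$). The ``collision projector'' $P=\sum_i|i,i\rangle\langle i,i|$ on a pair of $N$-registers satisfies $\mathrm{tr}(P\,\rho_A\otimes\rho_B)=\sum_iA_iB_i$. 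Applying amplitude estimation to the constant-query circuit that prepares $|\psi_p\rangle|\psi_q\rangle$ with $P$ as the ``good'' subspace thus estimates $c_{pq}:=\sum_ip_iq_i$; analogously one estimates $c_p:=\sum_ip_i^2$ and $c_q:=\sum_iq_i^2$. With $M$ Grover iterations the Brassard--H\o yer--Mosca--Tapp amplitude-estimation bound gives precision $O(\sqrt c/M+1/M^2)$, so $M=O(\sqrt N)$ yields precision $O(1/N)$ whenever the statistic in question is $O(1/N)$, producing $\|p-q\|_2^2=c_p+c_q-2c_{pq}$ to precision $O(1/N)$ in $O(\sqrt N)$ total queries.

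Since $\|p-q\|_2^2$ alone cannot separate $\|p-q\|_1\le a$ from $\|p-q\|_1\ge b$ at constant $a$ (a few heavy elements inflate $L_1$ without inflating $L_2$), I introduce a threshold $\tau$ and call $i$ \emph{heavy} if $\max(p_i,q_i)\ge\tau$, so that $|H|\le 2/\tau$. The heavy elements are located by sampling from $p$ and $q$ via $O_p,O_q$; for each candidate $i$, amplitude estimation applied to the indicator $|s\rangle|0\rangle\mapsto|s\rangle|\mathbb 1[O_p(s)=i]\rangle$ estimates $p_i$ (and $q_i$) to additive precision $\delta$ in $O(\sqrt{p_i}/\delta)$ queries, with $\sum_{i\in H}\sqrt{p_i}\le\sqrt{|H|}$ by Cauchy--Schwarz. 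For the light part I rerun collision amplitude estimation with the modified projector $P_L=\sum_{i\notin H}|i,i\rangle\langle i,i|$ (whose amplitude is $O(\tau)$) and bound the residual contribution by $\sum_{i\notin H}|p_i-q_i|\le\sqrt{N\cdot\|p-q\|_{2,L}^2}$ via Cauchy--Schwarz. The final estimate of $\|p-q\|_1$ is the heavy sum $\sum_{i\in H}|\hat p_i-\hat q_i|$ plus this light upper bound.

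The main obstacle is to balance the threshold $\tau$ and the various per-step precisions so that all three error sources (heavy estimation, light amplitude estimation, and the decision gap $b-a$) fit simultaneously into an $O(\sqrt N)$ query budget. A naive accounting -- heavy cost $\Theta(\tau^{-3/2}\epsilon^{-1})$ against light amplitude-estimation cost $\Theta(\sqrt{\tau}\,N\epsilon^{-1})$ -- balances at $\tau=N^{-1/2}$ and yields only $O(N^{3/4})$. Recovering the promised $O(\sqrt N)$ appears to require sharper use of the light-regime inequality $\sum_{i\notin H}(p_i-q_i)^2\le\tau\sum_{i\notin H}|p_i-q_i|$, which ties the precision needed for the light $L_2$ statistic to the light $L_1$ value one wishes to rule out, combined with an adaptive heavy-stripping procedure that amortises the amplitude-estimation cost against the actual heavy masses rather than the worst-case $\tau$. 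Formalising this refinement, together with the correctness of the adaptive heavy-element identification, is the crux of the proof.
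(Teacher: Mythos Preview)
Your approach has a genuine gap at both the correctness and the complexity levels.

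\textbf{Correctness.} Your final estimator is $\sum_{i\in H}|\hat p_i-\hat q_i|+\sqrt{N\cdot\|p-q\|_{2,L}^2}$. The second term is only an \emph{upper bound} on $\sum_{i\notin H}|p_i-q_i|$ via Cauchy--Schwarz, not an additive approximation. It can overshoot by an arbitrary constant factor: if the light part consists of $k$ elements with $|p_i-q_i|=\tau$ and is otherwise zero, the true light $L_1$ is $k\tau$ while your surrogate is $\tau\sqrt{Nk}$. Thus your output can exceed $\|p-q\|_1$ by $\Theta(1)$ even when $\|p-q\|_1\le a$, and you cannot decide the promise both ways. The ``sharper inequality'' you invoke, $\|p-q\|_{2,L}^2\le\tau\,\|p-q\|_{1,L}$, gives a \emph{lower} bound on the light $L_1$ from $L_2$, not the tighter upper bound you need, so it does not rescue the estimator.

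\textbf{Complexity.} You already observe that balancing heavy and light costs yields $O(N^{3/4})$, and you leave the promised refinement as ``the crux of the proof'' without actually supplying it. The adaptive heavy-stripping you sketch does not help: the bottleneck is the precision required on the light-$L_2$ statistic to make the Cauchy--Schwarz surrogate meaningful, and that precision target (additive $\Theta(1/N)$ on a quantity of size $\Theta(\tau)$) is what forces $\Theta(\sqrt\tau\,N)$ queries regardless of how cleverly you find the heavy set.

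The paper avoids the heavy/light decomposition and the $L_2$ detour entirely by an importance-sampling identity. Setting $r=(p+q)/2$ and $x_i=|p_i-q_i|/(p_i+q_i)\in[0,1]$, one has
\[
\tfrac12\|p-q\|_1=\mathbb{E}_{i\sim r}[x_i],
\]
so $O(1)$ samples $i_1,\dots,i_n$ drawn from $r$ already estimate $\|p-q\|_1$ to constant additive error. For each sampled $i$ one then runs amplitude estimation on the singletons $\{i\}$ to approximate $p_i$ and $q_i$. The point is that $x_i$ is a \emph{ratio}, so only constant \emph{relative} precision on $p_i,q_i$ is needed; and because $i$ was drawn from $r$, with probability $1-O(1)$ one has $p_i+q_i\gtrsim 1/N$, so $O(\sqrt N)$ queries per coordinate suffice by the standard $\sqrt{p_i}/\delta$ amplitude-estimation bound. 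This single identity is the missing idea; it replaces your entire heavy/light machinery and lands directly at $O(\sqrt N)$.
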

\begin{theorem}
\label{thm:uniform}
Uniformity is testable on a quantum computer in $O(N^{1/3})$ queries.
\end{theorem}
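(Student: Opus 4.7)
The plan is to reduce Testing Uniformity to estimating the \emph{collision probability} $c(p) \equiv \sum_{i=1}^N p_i^2$. Writing $u$ for the uniform distribution on $[N]$, one has the orthogonal decomposition $c(p) = \|u\|_2^2 + \|p-u\|_2^2 = 1/N + \|p-u\|_2^2$, and Cauchy--Schwarz gives $\|p-u\|_2^2 \ge \|p-u\|_1^2/N$. So under the promise of Problem~1 either $c(p) = 1/N$ or $c(p) \ge (1+\epsilon^2)/N$, and it suffices to approximate $c(p)$ to additive precision $\Theta(\epsilon^2/N)$.

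To estimate $c(p)$ within that precision in $O(N^{1/3})$ queries I combine classical sampling with Grover-style amplitude estimation, in the spirit of the Brassard--H{\o}yer--Tapp collision algorithm. First, draw a reference multiset $T=\{x_1,\dots,x_k\}$ of $k = \Theta(N^{1/3})$ classical samples from $p$, at cost $k$ queries. Second, build the marking function $f:[S]\to\{0,1\}$ defined by $f(s)=1$ iff $O_p(s)\in T$; each evaluation costs one query to $O_p$. The marked fraction is $a_T = \sum_{i\in T} p_i$, whose expectation over $T$ drawn from $p$ is $k\cdot c(p)$ up to a self-collision correction of order $k^2/N$ which is negligible for $k=N^{1/3}$. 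Third, run amplitude estimation with $M = \Theta(N^{1/3})$ iterations on the uniform superposition over $[S]$ and the marking $f$; this returns an estimator $\tilde a$ of $a_T$ with additive error $O(\sqrt{a_T}/M + 1/M^2) = O(\epsilon^2 k/N)$ whenever $a_T = O(k/N)$, which is exactly the resolution required to distinguish the two cases. The tester accepts ``uniform'' iff $\tilde a/k \le (1+\epsilon^2/2)/N$, and success probability is boosted to $2/3$ by the usual median-of-runs amplification. The total query cost is $k+M = O(N^{1/3})$.

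The main obstacle is the concentration of the random amplitude $a_T$ around its mean: with $k = N^{1/3}$ samples, the variance of $a_T$ is controlled by $k\sum_i p_i^3$, which can be inflated by heavy elements with $p_i \gg 1/N$. I intend to handle this by separating the heavy part of $p$ from the light part. Heavy indices can be discovered from repeated appearances in the reference sample and their individual weights estimated by amplitude estimation applied to the indicator of each candidate index. Only $O(\epsilon^{-2})$ such indices can exist without the tester having already rejected, so their contribution to $c(p)$ can be computed directly within the $O(N^{1/3})$ query budget. For the remaining light part one has $\max_i p_i \lesssim 1/N$, hence $\sum_i p_i^3 \lesssim c(p)/N$, making the concentration step for the sampling-plus-amplitude-estimation procedure routine. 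Adding the heavy and light contributions then yields an $\Theta(\epsilon^2/N)$-accurate estimate of $c(p)$ and hence the desired tester.
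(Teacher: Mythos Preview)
Your core algorithm coincides with the paper's: draw $M=\Theta(N^{1/3})$ samples $S$, spend $\Theta(N^{1/3})$ further queries on amplitude estimation of $p_S=\sum_{a} p_{i_a}$, and compare to the threshold $(1+\Theta(\epsilon^2))M/N$. You also correctly isolate the obstacle, that $\Var(p_S)=M\bigl(\sum_i p_i^3 - c(p)^2\bigr)$ can be inflated by heavy elements.

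The gap is in your treatment of those heavy elements. In a reference sample of size $k=\Theta(N^{1/3})$ an index $i$ repeats only when $kp_i\gtrsim 1$, i.e.\ when $p_i\gtrsim N^{-1/3}$, so after your detection step the light part satisfies only $\max_i p_i\lesssim N^{-1/3}$, not $\lesssim 1/N$ as you write. This yields merely $\sum_i p_i^3\le N^{-1/3}c(p)$, and in the relevant regime $c(p)=\Theta(1/N)$ the standard deviation of $a_T$ is $\Theta(N^{-1/2})$, larger by a factor $N^{1/6}$ than the resolution $\epsilon^2 k/N=\Theta(N^{-2/3})$ you need. A concrete bad case: take any extremal $\epsilon$-far distribution with all $p_i\in\{0,2/N\}$ and add a single spike of height $N^{-1/2}$; the spike has expected multiplicity $N^{-1/6}$ in $T$ and is essentially never detected as a repeat, yet by itself contributes $kN^{-3/2}=N^{-7/6}$ to $\Var(a_T)$, already swamping the signal. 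Indices with $p_i\in[N^{-2/3},N^{-1/3}]$ are simultaneously too light to repeat and too heavy for your Chebyshev step, and the claim that only $O(\epsilon^{-2})$ heavy indices can survive is not justified at this threshold.

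The paper neither tries to detect big elements nor aims for two-sided concentration. With ``big'' meaning $p_i>1/(2M^2)=\Theta(N^{-2/3})$ it proves only the one-sided bound $\Pr[\,p_S\ge(1+\epsilon^2/2)M/N\,]\ge\tfrac12 e^{-\alpha}$, $\alpha=2^8\epsilon^{-4}$, by splitting on the total big weight $w_{\mathrm{big}}$: if $w_{\mathrm{big}}>\alpha/M$ then with probability $\ge 1/2$ enough big elements land in $S$ to force $p_S\ge 2M/N$; if $w_{\mathrm{big}}\le\alpha/M$ then with probability $\ge e^{-\alpha}$ the sample contains no big element at all, and conditioned on that event the truncated distribution has $\|\cdot\|_\infty\le 1/(2M^2)$ so Chebyshev goes through. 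Repeating the test $O(e^{\alpha})$ times then yields the theorem. Your outline becomes a valid proof once the ``detect repeats and estimate each heavy index individually'' step is replaced by this conditioning-and-repeat argument.
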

\begin{theorem}
\label{thm:orthog}
Orthogonality is testable on a quantum computer in $O(N^{1/3})$ queries.
\end{theorem}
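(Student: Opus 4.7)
The approach I would take is a quantum collision-finding algorithm in the spirit of Brassard--H{\o}yer--Tapp (BHT), with the ``function'' replaced by the pair of oracles $(O_p,O_q)$. First, I reduce the promise to a collision-density statement. Using $\|p-q\|_1 = 2 - 2\sum_i \min(p_i,q_i)$, the NO-case bound $\|p-q\|_1 \le 2-\epsilon$ is equivalent to $\sum_i \min(p_i,q_i) \ge \epsilon/2$, so the Bhattacharyya coefficient $c := \sum_i \sqrt{p_iq_i}$ satisfies $c \ge \epsilon/2$ in the NO case (and $c=0$ in the YES case). Cauchy--Schwarz applied to $c = \sum_i 1\cdot\sqrt{p_iq_i}$ then yields $\sum_i p_iq_i \ge c^2/N \ge \epsilon^2/(4N)$: a random pair $(s,s')\in[S]^2$ cross-collides ($O_p(s)=O_q(s')$) with probability at least $\epsilon^2/(4N)$ in the NO case, and probability $0$ in the YES case.

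The algorithm: set $k = C(N/\epsilon^2)^{1/3}$. Classically query $O_p$ on $k$ independent uniform inputs $s_1,\ldots,s_k \in [S]$ and store the list $T = \{O_p(s_j)\}_{j=1}^k$ in a sorted table. Then run Boyer--Brassard--H{\o}yer--Tapp (BBHT) quantum search over $s' \in [S]$ using $O_q$ with marking predicate $O_q(s')\in T$, cut off after $C'k$ iterations. Output \emph{not orthogonal} if some marked $s'$ is found, else \emph{orthogonal}; correctness in the YES case is immediate since disjoint supports preclude any marked $s'$. The heart of the analysis is showing that in the NO case the $q$-weight $W := \sum_{i\in T}q_i$ of the table is $\Omega(1/k^2)$ with constant probability, which forces BBHT to terminate within $O(\sqrt{1/W})=O(k)$ queries. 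Introduce the threshold $H := \{i:p_i\ge 1/k\}$ and split $\sum_i p_iq_i = A+B$ with $A = \sum_{i\notin H}p_iq_i$ and $B = \sum_{i\in H}p_iq_i$. \emph{Light case} ($A\ge\epsilon^2/(8N)$): the bound $1-(1-p_i)^k\ge kp_i/2$ valid for $kp_i\le 1$ gives $\mathbb{E}[W]\ge (k/2)A=\Omega(k\epsilon^2/N)=\Omega(1/k^2)$ by our choice of $k$. \emph{Heavy case} ($B\ge\epsilon^2/(8N)$): Cauchy--Schwarz on the light part of $c$ gives $\sum_{i\notin H}\sqrt{p_iq_i}\le\sqrt{NA}<\epsilon/\sqrt 8$, hence $\sum_{i\in H}\sqrt{p_iq_i}=\Omega(\epsilon)$; a second Cauchy--Schwarz on $H$ (using $\sum_{i\in H}p_i\le 1$) gives $\sum_{i\in H}q_i=\Omega(\epsilon^2)$, and each heavy $i$ lands in $T$ with probability $\ge 1-1/e$, so $W=\Omega(\epsilon^2)=\Omega(1/k^2)$ again. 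In either case the total query count is $O(k)=O(N^{1/3})$, amplified to success probability $2/3$ by $O(1)$ independent repetitions.

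The main obstacle will be lifting the expectation bound on $W$ to a constant-probability lower bound. The indicators $\mathbf{1}[i\in T]$ are negatively correlated across distinct $i$, which buys the variance bound $\mathrm{Var}(W)\le \sum_i q_i^2\,\mathrm{Var}(\mathbf{1}[i\in T])\le k\sum_i p_iq_i^2$; unfortunately individually large $q_i$ can blow this up relative to $\mathbb{E}[W]^2$, so a naive Paley--Zygmund argument fails. I expect the fix is a symmetric threshold $H_q := \{i:q_i\ge 1/k\}$: enumerate $H_q$ using $O(k)$ additional classical samples from $q$, insert those elements into $T$ by hand, and apply Chebyshev only to the remaining $q$-light contribution (whose weights are bounded by $1/k$ and for which $\sum_i q_i^2 p_i \le (1/k)\sum_i p_iq_i$ keeps the second moment under control). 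A standard secondary issue is tuning the BBHT cutoff so that both promise cases terminate within the $O(N^{1/3})$ budget; this follows from the classical BBHT analysis once the density lower bound on $W$ is in hand.
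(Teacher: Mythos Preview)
Your algorithm---draw $k=\Theta(N^{1/3})$ samples from $p$, store them in a table $T$, then run a Grover-type search over the $q$-oracle for elements of $T$---is exactly the paper's {\bf OTest} (the paper uses {\bf EstProb} rather than BBHT, but that is cosmetic). Your reduction of the NO promise to $\sum_i p_iq_i\ge\epsilon^2/(4N)$ and your identification of the central obstacle (upgrading $\mathbb E[W]=\Omega(1/k^2)$ to a constant-probability lower bound on $W=q_T$) are both correct; this is precisely the content of the paper's Lemma~\ref{lemma:orthog}.

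The gap is in your proposed fix. Inserting $O(k)$ raw $q$-samples into $T$ destroys soundness: in the orthogonal case those samples lie in $\mathrm{supp}(q)$, so the subsequent search over the $q$-oracle will trivially find them and falsely reject. Even ignoring that, your Chebyshev bound on the $q$-light part does not close. You need $\mathrm{Var}(W'_L)/(\mathbb E[W'_L])^2=O(1)$, and your second-moment estimate $\sum_i p_iq_i^2\le(1/k)\mu_L$ yields the ratio $1/(k^2\mu_L)$; but the only lower bound available is $\mu_L\ge\Omega(\epsilon^2/N)=\Omega(1/k^3)$, giving $k^2\mu_L=O(1/k)\to 0$. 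Bernstein and Hoeffding run into the same wall, because the target mean $k\mu_L$ is itself only $\Theta(1/k^2)$.

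The paper sidesteps the weighted-sum concentration entirely. It defines $B=\{i:q_i<\tfrac\epsilon4 p_i\}$ and $C=\{i:p_i\le\tfrac{\epsilon}{32}N^{-1}\}$, notes that $p_B\le 1-\epsilon/4$ and $p_C\le\epsilon/32$, and applies Chernoff to the \emph{count} of samples landing in $B$ (resp.\ $C$)---a sum of $0/1$ indicators, well concentrated regardless of the $q_i$'s---to obtain $|A\cap B^c\cap C^c|\ge(\epsilon/16)M$ with probability $\ge 1/2$. On $B^c\cap C^c$ every element has the pointwise lower bound $q_i\ge\tfrac\epsilon4 p_i>\epsilon^2/(128N)$, so the count times this bound gives $q_A\ge\Omega(\epsilon^3 M/N)$ directly. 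The key idea you are missing is to manufacture a uniform per-element lower bound on $q_i$ over a subset that still captures a constant fraction of the $p$-samples; this converts the problem from concentrating a weighted sum to concentrating a count.
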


%SBB:
It is known that classically testing Orthogonality and Uniformity
requires $\Omega(N^{1/2})$ queries, see Sections~\ref{subs:collision}
and~\ref{subs:ulower}, while Statistical Difference is not testable in
$O(N^\alpha)$ queries for any $\alpha<1$, see~\cite{Val-symmetric-08}.
Therefore quantum computers provide a polynomial speedup for testing
Uniformity, Orthogonality, and Statistical Difference in terms of
query complexity.

Testing Orthogonality is closely related to the Collision Problem studied in~\cite{BHT-collision-97,Aar-collision-02}.
In Section~\ref{subs:collision} we describe a randomized reduction from the Collision Problem to testing Orthogonality.
Using the quantum lower bound for the Collision Problem due to Aaronson and Shi~\cite{AS-collision-04}
we obtain the following result.
%SBB:
\begin{theorem}
\label{thm:qlower}
Testing Orthogonality on a quantum computer requires $\Omega(N^{1/3})$ queries.
\end{theorem}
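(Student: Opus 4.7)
The plan is to give a randomized, query-preserving reduction from the Collision Problem to Testing Orthogonality, and then invoke the $\Omega(N^{1/3})$ quantum query lower bound of Aaronson and Shi~\cite{AS-collision-04}. Recall that the Collision Problem is, given oracle access to a function $f:[N]\to [N]$ promised to be either one-to-one or exactly two-to-one, to decide which with bounded error.

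The reduction is as follows. First draw a uniformly random balanced partition $[N]=A\sqcup B$ with $|A|=|B|=N/2$, and then define oracles $O_p,O_q:[N/2]\to [N]$ by restricting $f$ to $A$ and $B$ respectively (after fixing arbitrary bijections $[N/2]\to A$ and $[N/2]\to B$). The induced distributions on $[N]$ are $p_i=(2/N)\,|f^{-1}(i)\cap A|$ and $q_i=(2/N)\,|f^{-1}(i)\cap B|$, and each query to $O_p$ or $O_q$ costs exactly one query to $f$.

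For correctness: if $f$ is one-to-one then $f(A)\cap f(B)=\emptyset$, so the supports of $p$ and $q$ are disjoint and $p,q$ are orthogonal. If $f$ is two-to-one then each of the $N/2$ image values $i$ has exactly two preimages and, writing $n_i=|f^{-1}(i)\cap A|\in\{0,1,2\}$, one obtains $|p_i-q_i|=(4/N)\,\mathbf{1}[n_i\neq 1]$. Under a random balanced partition $\Pr[n_i\neq 1]\approx 1/2$, so $\mathbb{E}\,\|p-q\|_1\approx 1$, bounded away from $2$ by an absolute constant. A concentration argument --- either McDiarmid applied to the random partition, or equivalently independent Bernoulli coin flips for membership in $A$ followed by an $O(\sqrt{N})$ imbalance correction --- then shows $\|p-q\|_1\le 3/2$ with probability at least $3/4$ over the partition, so the $\epsilon=1/2$ orthogonality promise holds with constant probability.

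A quantum Orthogonality tester using $T(N)$ queries therefore yields, after $O(1)$ independent repetitions of the reduction with a majority vote to absorb the partition failure probability, a quantum algorithm solving the Collision Problem with $O(T(N))$ queries. The Aaronson--Shi bound then forces $T(N)=\Omega(N^{1/3})$. I expect the only nontrivial step to be the concentration claim in the two-to-one case; the orthogonal case and the query accounting are immediate, so the proof reduces to pinning down that $\|p-q\|_1$ is bounded away from $2$ with constant probability, which is routine with any standard bounded-differences inequality.
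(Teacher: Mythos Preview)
Your reduction is essentially the paper's: both split the domain of the Collision oracle into two random halves of size $N/2$, use the restrictions as the oracles for $p$ and $q$, observe orthogonality in the one-to-one case, and argue that $\|p-q\|_1$ is bounded away from $2$ with constant probability in the two-to-one case. Two small differences are worth noting. First, the paper takes the Collision function to have range $[3N/2]$ rather than $[N]$, because the Aaronson--Shi bound in~\cite{AS-collision-04} as originally stated needs the larger range; you should either enlarge the range accordingly or cite the later refinements that remove this restriction. Second, for the two-to-one case the paper proves concentration by generating the random balanced split as a random perfect matching built sequentially and bounding the number of same-parity pairs combinatorially, obtaining $\Pr[\|p-q\|_1\le 7/4]\ge 1/2$; your independent-Bernoulli-plus-$O(\sqrt{N})$-rebalancing argument is a cleaner route to the same conclusion and is perfectly valid.
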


Quite recently Chakraborty, Fischer, Matsliah, and de Wolf~\cite{CFMW09} independently discovered a quantum Uniformity testing algorithm with query complexity $O(N^{1/3})$ and proved a lower bound $\Omega(N^{1/3})$ for testing Uniformity.
These authors also presented a quantum algorithm for testing whether
an unknown distribution $p$ coincides with a known distribution $q$ with query complexity $\tilde{O}(N^{1/3})$.

\subsection{Discussion and open problems}
 One motivation for studying distribution testing problems
 is that testing Orthogonality and Statistical Difference are complete problems for the complexity
class SZK (Statistical Zero Knowledge). More precisely, the following problem known as
{\em Statistical Difference} was shown to be
SZK-complete by Vadhan~\cite{SV-SZK-97}:\\
{\bf Input:} {\it  description of classical circuits $C_p,C_q$ that implement oracle functions $O_p,O_q\, : \, [S]\to [N]$ and
a pair of real numbers $0\le a<b\le 2$ such that $2a\le b^2$.}\\
{\bf Problem:} {\it Decide whether $\| p-q\|_1 \ge b$ (yes-instance) or $\|p-q\|_1\le a$ (no-instance)}.

\noindent
The class SZK includes many interesting algebraic and graph theoretic problems such as Discrete Logarithm, Graph Isomorphism, Graph NonIsomorphism, Quadratic Residuosity,  and The Shortest Vector in Lattice, see~\cite{AT-adia-03} and references therein. Thus it is  natural to ask whether quantum computers provide a universal speedup for problems in SZK similar to the square-root speedup for problems in NP provided by the Grover search algorithm.
Assuming that the circuits $C_p,C_q$ have size $poly(\log{(N)})$, one
can easily translate  the testing algorithm described in Section~\ref{sec:statdif}
to a quantum circuit of size $\tilde{O}(\sqrt{N})$ solving Statistical Difference problem
for any constants $a,b$ as above.
On the other hand, any classical algorithm treating the circuits $C_p,C_q$ as black boxes would
need roughly $N^{1-o(1)}$ queries, see~\cite{Val-symmetric-08}, thus requiring a circuit of size $\Omega(N^{1-o(1)})$.

Note that the Statistical Difference problem with $b=2$ is equivalent
to testing Orthogonality.  It can be solved classically in time
$\tilde{O}(N^{1/2})$ using the classical collision finding algorithm.
Unfortunately, the circuit complexity of the quantum Orthogonality
testing algorithm described in Section~\ref{sec:orthog} may be
different from its query complexity since it uses a quantum membership
oracle for a randomly generated set.  It is an open problem
whether Statistical Difference problem with $b=2$ can be solved by a
quantum circuit of size $\tilde{O}(N^{1/3})$, although with a suitably
powerful model of quantum RAM, such membership queries can be done in
time $\poly\log(N)$.  A related question is that of space-time
tradeoffs: our algorithms generally require storing $N^{O(1)}$
classical bits and then querying them with quantum algorithms that use
$\poly(\log(N)$ qubits.  We suspect that this amount of storage cannot
be reduced without increasing the run-time, but do not have a proof of
this conjecture.  Similar issues of quantum data structures for set
membership and conjectured space-time tradeoffs have arisen for the
element distinctness problem\cite{Amb-distinct-07,GR-memory-04}.

It is worth mentioning that all distribution properties studied in this paper
are {\em symmetric}, that is, these properties are invariant under relabeling of elements in the
underlying set $\{1,\ldots,N\}$. Testing  symmetric properties of distributions is equivalent to
testing properties of functions from $[S]$ to $[N]$ that are invariant under any permutations
of inputs and outputs of the function. It was recently shown by Aaronson and Ambainis that
quantum computers can provide at most polynomial speedup for testing
properties of such symmetric functions~\cite{AA-structure-09}.

More interesting than the mere fact of polynomial speedups provided by Theorems~\ref{thm:statdif},\ref{thm:uniform},\ref{thm:orthog}
is the way in which  our algorithms achieve it.  Classically, the results of
Ref.~\cite{Val-symmetric-08} provide  a simple characterization
of an asymptotically optimal testing algorithm for any symmetric property
of a distribution (satisfying certain natural continuity conditions).
By contrast, our algorithms use a variety of different strategies both to query the
oracles and to analyze the results of those queries.  These strategies
appear not to be special cases of the quantum walk framework which has
been responsible for most of the polynomial quantum speedups found to
date~\cite{Szegedy04,Santha08}.  A major challenge for future research
is to give a quantum version of Ref.~\cite{Val-symmetric-08}'s Canonical Tester algorithm;
in other words, we would like to characterize optimal quantum
algorithms for testing any symmetric property of a distribution (or a pair of distributions).

Finally, let us remark that the algorithm for estimating statistical difference described in Section~\ref{sec:statdif}
can be easily generalized to construct a quantum algorithm for estimating the von Neumann entropy of a black-box
distribution with query complexity $\tilde{O}(N^{1/2})$.
Using similar ideas one can construct an $\tilde{O}(N^{1/2})$-time
algorithm for estimating the fidelity between two black-box
distributions (i.e. $\sum_{i=1}^N \sqrt{p_iq_i}$).

The rest of the paper is organized as follows.
Section~\ref{sec:prelim} introduces necessary notations and basic facts about the quantum counting algorithm by Brassard,
Hoyer, Mosca, and  Tapp~\cite{BHMT-ampl-00}. The distribution testing algorithms described in the rest of the paper are actually classical probabilistic algorithms using the quantum counting as a subroutine.
 Theorem~\ref{thm:statdif} is proved in Section~\ref{sec:statdif}.
Theorem~\ref{thm:uniform} is proved in Section~\ref{sec:uniform}. Theorem~\ref{thm:orthog} is proved
in Section~\ref{sec:orthog}. We discuss lower bounds for the above distribution testing problems in Section~\ref{sec:lower}.

%Implications, significance, open problems. Uniformity: mixing time of random walks, reduction from the Statistical Difference %with
%known q to testing Uniformity. Orthogonality: SZK-complete, graph non-isomorphism as an example,
 %testing bipartiteness. Statistical Difference: ?.

%%%%%%%%%%%%%%%%%%%%%%%%%%%%%%%%%%%%%
%%%%%%%%%%%%%%%%%%%%%%%%%%%%%%%%%%%%%
\section{Preliminaries}
\label{sec:prelim}
%%%%%%%%%%%%%%%%%%%%%%%%%%%%%%%%%%%%%
%%%%%%%%%%%%%%%%%%%%%%%%%%%%%%%%%%%%%
Let $\calD_N$  be a set of probability
distributions $p=(p_1,\ldots,p_N)$ such that a probability $p_i$ of any element $i\in [N]$ is a rational number.
Let us say that an oracle $O\, : \, [S]\to [N]$ generates a distribution $p\in \calD_N$ iff  for all $i\in [N]$
the probability $p_i$ equals the fraction of inputs $s\in [S]$ such that $O(s)=i$,
\[
p_i=\frac1S\,  \# \{ s\in [S]\, : \, O(s)=i\}.
\]
Note that the identity of elements in the domain of an oracle  $O$ is
irrelevant, so if $O$ generates $p$ and $\sigma$ is any
permutation on $[S]$ then $O\circ \sigma$ also generates $p$. By definition, any map $O\, : \, [S]\to [N]$
generates some distribution $p\in \calD_N$.

For any oracle $O\, : \, [S]\to [N]$ we shall define a quantum oracle $\hat{O}$ by transforming $O$ into a reversible form
and allowing it to accept coherent superpositions of queries.
Specifically, a quantum oracle $\hat{O}$ is a unitary operator  acting on a Hilbert space $\CC^S \otimes \CC^{N+1}$
equipped with a standard basis $\{ |s\ra\otimes |i\ra\}$, $s\in [S]$, $i\in \{0\}\cup [N]$
such that
\be
\hat{O}\, |s\ra \otimes |0\ra = |s\ra \otimes |O(s)\ra \quad \mbox{for all $s\in [S]$}.
\ee
In other words, querying $\hat{O}$ on a basis vector $|s\ra\otimes |0\ra$
one gets the output of the classical oracle $O(s)$ in the second register while the first register keeps a copy of $s$
to maintain unitarity.
The action of $\hat{O}$ on a subspace in which the second register is orthogonal to the state $|0\ra$
can be arbitrary. We shall assume that a quantum tester can execute operators $\hat{O}$, $\hat{O}^\dag$
and the controlled versions of them. Execution of any one of these operators counts as one query.

We shall see that all testing problems posed in Section~\ref{sec:intro} can be reduced (via classical randomized reductions) to the following problem.
\begin{problem}[\bf Probability Estimation]
Given integers $S,N$, description of a subset $A\subset [N]$,  precision $\delta$,
error probability $\omega$,  and access to an oracle
generating some distribution $p\in \calD_N$. Let $p_A=\sum_{i\in A} p_i$ be the total probability of $A$.
One needs to generate an estimate $\tilde{p}_A$ satisfying
\be
\label{approx}
\prob{|\tilde{p}_A - p_A |\le \delta}\ge 1-\omega.
\ee
\end{problem}

Our main technical tool will be the quantum counting algorithm  by Brassard et al.~\cite{BHMT-ampl-00}.
Specifically, we shall use the following version of Theorem~12 from~\cite{BHMT-ampl-00}.
\begin{theorem}
\label{thm:BHMT}
There exists a quantum algorithm $\EstProb{p}{A}{M}$
taking as input a distribution $p\in \calD_N$ specified by an oracle,
a subset $A\subset [N]$, and an integer $M$.
The algorithm makes exactly $M$ queries to
the oracle generating $p$ and outputs an estimate $\tilde{p}_A$ such that
\be
\label{BHMT}
\prob{|\tilde{p}_A - p_A|\le \delta}\ge 1-\omega
\ee
for all $\delta>0$ and $0\le \omega \le 1/2$ satisfying
\be
\label{BHMT1}
M\ge \frac{c\sqrt{p_A}}{\omega \delta} \quad \mbox{and} \quad M\ge \frac{c}{\omega \sqrt{\delta}}.
\ee
Here $c=O(1)$ is some constant.
If $p_A=0$ then $\tilde{p}_A=0$ with certainty.
\end{theorem}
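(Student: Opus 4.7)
The plan is to reduce $\EstProb{p}{A}{M}$ to standard amplitude estimation and then invoke Theorem~12 of~\cite{BHMT-ampl-00} almost directly. Using one application of $\hat O$ to a uniform superposition over $[S]$ in the first register and $|0\rangle$ in the second, one prepares
$$|\psi\rangle = \frac{1}{\sqrt{S}}\sum_{s\in[S]}|s\rangle|O(s)\rangle;$$
call the corresponding state-preparation unitary $\mathcal{A}$. Let $\Pi_A$ be the projector onto basis vectors whose second register lies in $A$; then $\langle\psi|\Pi_A|\psi\rangle = p_A$ by the definition of ``$O$ generates $p$''. The Grover-type iterate $Q = -\mathcal{A}(I-2|0\rangle\langle 0|)\mathcal{A}^{-1}(I-2\Pi_A)$, which costs two queries of $\hat O$ per application, restricts to a rotation by angle $2\theta$ on the two-dimensional invariant subspace containing $|\psi\rangle$, where $\sin^2\theta = p_A$.

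The next step is to run phase estimation on $Q$ with enough ancilla qubits so that the total number of queries to $\hat O$ equals $M$, and to set $\tilde p_A = \sin^2(\pi \tilde y/K)$, where $\tilde y$ is the phase-estimation outcome and $K$ the size of the phase register. The standard analysis of amplitude estimation then yields, with constant success probability $\geq 8/\pi^2$,
$$|\tilde p_A - p_A|\ \leq\ 2\pi\,\frac{\sqrt{p_A(1-p_A)}}{M} + \frac{\pi^2}{M^2}.$$
Requiring each of the two summands to be at most $\delta/2$ gives the two independent sufficient conditions $M\geq c_0\sqrt{p_A}/\delta$ and $M\geq c_0/\sqrt\delta$ for some universal $c_0$, which is the form \eqref{BHMT1} before incorporating $\omega$. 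To upgrade the constant $8/\pi^2$ success probability to $1-\omega$ without paying a $\log(1/\omega)$ factor, one argues, as in~\cite{BHMT-ampl-00}, that the $\omega$-dependence can be absorbed directly into the resolution of the phase-estimation register rather than through an outer median-of-estimates amplification; this is exactly what replaces $c_0$ by $c_0/\omega$ in both bounds on $M$.

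The guarantee that $p_A=0$ implies $\tilde p_A=0$ with certainty is automatic from the construction: if $p_A=0$ then $|\psi\rangle \in \ker\Pi_A$, so $Q|\psi\rangle = |\psi\rangle$, phase estimation returns the outcome $0$ deterministically, and hence $\tilde p_A = \sin^2(0) = 0$. The main obstacle in turning this outline into a complete proof is the bookkeeping in the last paragraph of the previous step, namely decoupling the two error terms of the amplitude-estimation bound into two clean sufficient conditions on $M$ and verifying that the $1/\omega$ scaling, rather than $\log(1/\omega)$, can indeed be achieved by tightening the phase-estimation resolution instead of by post-processing; this requires re-reading the internal parameters of the BHMT algorithm rather than merely citing it as a black box.
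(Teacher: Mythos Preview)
Your proposal is correct and follows essentially the same route as the paper: reduce to amplitude estimation via the phase-flip oracle built from $\hat O,\hat O^\dag$, invoke the BHMT analysis, and read off the two sufficient conditions on $M$. The paper's write-up is shorter only because it quotes Theorem~12 of~\cite{BHMT-ampl-00} in its full $k$-parameter form,
\[
\Pr\!\left[\,|\tilde p_A-p_A|\le 2\pi k\,\frac{\sqrt{p_A(1-p_A)}}{M'}+\frac{k^2\pi^2}{(M')^2}\,\right]\ge 1-\frac{1}{2(k-1)},
\]
and then simply sets $k=\lceil 1+1/(2\omega)\rceil$ and $M=2M'$; this immediately gives the $1/\omega$ scaling you flagged as the ``main obstacle,'' so no reopening of phase-estimation internals is needed. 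Note that the mechanism is not really ``tightening the resolution'' but rather the $1/k$ tail of the phase-estimation output for \emph{fixed} $M'$: the estimate lands within $k$ units of the true phase with probability $1-O(1/k)$, which is exactly what Theorem~12 encodes.
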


\omitted{\begin{proof}
Let $O\, : \, [S]\to [N]$ be the oracle generating $p$.
Using one query to $\hat{O}$ and one query to $\hat{O}^\dag$ one can implement a phase-flip oracle
$W_A\, : \, \CC^S \to \CC^S$ such that
\[
W_A\, |s\ra = \left\{ \ba{rcl}
-|s\ra &\mbox{if} & O(s)\in A, \\
|s\ra &\mbox{if} & O(s) \notin A. \\
\ea \right.
\]
Theorem~12 from~\cite{BHMT-ampl-00} implies that for any integer $M'\ge 1$ there exists a quantum algorithm
using an operator $\Lambda(W_A)$ exactly $M'$ times that outputs an estimate $\tilde{p}_A$ ($0\le \tilde{p}_A \le 1$) satisfying
\be
\label{phase_est1}
\prob{|\tilde{p}_A - p_A| \le 2\pi k \frac{\sqrt{p_A (1-p_A)}}{M'} + k^2 \frac{\pi^2}{(M')^2}}\ge 1-\frac1{2(k-1)}
\ee
for all integers $k\ge 2$.  Moreover, if $p_A=0$ then $\tilde{p}_A=0$ with certainty.

Choosing $k$ as the smallest integer such that $k\ge 1+1/2\omega$ and
$M=2M'$ we conclude that Eq.~(\ref{BHMT1}) holds whenever
\[
\frac{\sqrt{p_A}}{\omega M} \le c' \delta  \quad \mbox{and} \quad
\frac{1}{\omega^2 M^2} \le c'' \delta
\]
for some constants $c',c''$. This is equivalent to Eq.~(\ref{BHMT1}).

\end{proof}}

%%%%%%%%%%%%%%%%%%%%%%%%%%%%%%%%%%%%%%%%%%%%%%%%%%
%%%%%%%%%%%%%%%%%%%%%%%%%%%%%%%%%%%%%%%%%%%%%%%%%%
\section{Quantum algorithm for estimating statistical difference}
\label{sec:statdif}
%%%%%%%%%%%%%%%%%%%%%%%%%%%%%%%%%%%%%%%%%%%%%%%%%%
%%%%%%%%%%%%%%%%%%%%%%%%%%%%%%%%%%%%%%%%%%%%%%%%%%
In this section we prove Theorem~\ref{thm:statdif}. Let $p,q\in \calD_N$ be unknown distributions specified by
oracles. Define an auxiliary distribution $r\in \calD_N$ such that
$r_i=(p_i + q_i)/2$ for all $i\in [N]$. If we can sample $i$ from both $p$ and $q$ then by choosing randomly between these
two options we can also sample $i$ from $r$.
Let $x\in [0,1]$ be a random variable which takes value
\[
x_i=\frac{|p_i-q_i|}{p_i+q_i}
\]
with probability $r_i$. It is evident that
\be
\label{EE(x)}
\EE(x)=\sum_{i\in [N]} r_i x_i = \frac12  \sum_{i\in [N]} |p_i -q_i| = \frac12\, \| p-q \|_1.
\ee
Thus in order to estimate the distance $\| p-q \|_1$  it suffices to estimate
the expectation value $\EE(x)$ which can be done using the standard Monte Carlo method.
Since we have to estimate $\EE(x)$  only with a constant precision, it
suffices to generate $O(1)$ samples of $x_i$.
Given a sample of $i$ (which is easy to generate classically) we can estimate $x_i$ by calling the probability estimation
algorithm to get estimates of $p_i$ and $q_i$.
It suggests the following algorithm for estimating the distance $\| p-q \|_1$.
\begin{center}
%\begin{figure}[h]
\fbox{\parbox{16cm}{ $\EstDist{p}{q}{\epsilon}{\tau}$
\noindent

Set $n=27/\tau\epsilon^2$, $M=c \sqrt{N}/\epsilon^6 \tau^4$.

Let $i_1,\ldots,i_n\in [N]$ be a list of $n$ independent samples drawn from $r$.

For $a=1,\ldots,n$

\{

$\quad$
\parbox[t]{15cm}{
Let $\tilde{p}_{i_a}$ be estimate of $p_{i_a}$ obtained using $\EstProb{p}{\{i_a\}}{M}$.

Let $\tilde{q}_{i_a}$ be estimate of $q_{i_a}$ obtained using  $\EstProb{q}{\{i_a\}}{M}$.

Let $\tilde{x}_{i_a}= |\tilde{p}_{i_a} - \tilde{q}_{i_a}|/( \tilde{p}_{i_a} + \tilde{q}_{i_a})$ be estimate of $x_{i_a}$.

}

\}

Output $\tilde{x}=(1/n)\sum_{a=1}^n \tilde{x}_{i_a}$.}}
%\end{figure}
\end{center}
Here $c=O(1)$ is a constant whose precise value will not be important for us.

\begin{lemma}
\label{lemma:statdif}
The algorithm $\EstDist{p}{q}{\epsilon}{\tau}$
outputs an estimate $\tilde{x}$ satisfying
\be
\prob{ | \tilde{x} -\EE(x) |< \epsilon}\ge 1-\tau,
\ee
where $\EE(x)=(1/2)\|p-q\|_1$.
\end{lemma}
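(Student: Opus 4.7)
The plan is to decompose $|\tilde{x} - \EE(x)| \le |\tilde{x} - \bar{x}| + |\bar{x} - \EE(x)|$, where $\bar{x} := (1/n)\sum_a x_{i_a}$ is the empirical mean of the \emph{true} values $x_{i_a}$, and to control each piece separately by $\epsilon/3$ with failure probability at most $\tau/3$. The second piece is a routine concentration bound: the $x_{i_a}\in[0,1]$ are i.i.d.\ with mean $\EE(x)$, so $\Var(x)\le 1$, and Chebyshev's inequality with $n=27/(\tau\epsilon^2)$ samples gives exactly $\Pr[|\bar x-\EE(x)|\ge \epsilon/3]\le 9/(n\epsilon^2)=\tau/3$.

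The substance of the proof is the first piece. The delicate point is that $x_i=|p_i-q_i|/(p_i+q_i)$ is a ratio whose numerical estimate becomes unreliable when $s_i:=p_i+q_i$ is very small; fortunately the sampling distribution $r_i=s_i/2$ weights each point by exactly $s_i$, so small values of $s_{i_a}$ are correspondingly rare. I would introduce a ``moderate-$s$'' event $G_a:=\{s_{i_a}\ge 1/(\alpha N)\}$ satisfying
\[
\Pr[G_a^c]\;=\;\sum_{i:\,s_i<1/(\alpha N)}\tfrac{s_i}{2}\;\le\;\tfrac{1}{2\alpha},
\]
and pick $\alpha=\Theta(1/(\tau\epsilon))$ so that $\Pr[G_a^c]=O(\tau\epsilon)$.

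Next, invoke \thmref{BHMT} for each of the $2n$ calls to $\EstProb$ with failure probability $\omega=\Theta(\tau\epsilon)$; on the resulting good event $E_a$, both $|\tilde p_{i_a}-p_{i_a}|$ and $|\tilde q_{i_a}-q_{i_a}|$ are bounded by $\eta_a=O\bigl(\sqrt{s_{i_a}}/(\omega M)+1/(\omega^2 M^2)\bigr)$ (using $p_{i_a},q_{i_a}\le s_{i_a}$). A direct algebraic computation with $\tilde s=s+\Delta_p+\Delta_q$ and $\tilde d=d+\Delta_p-\Delta_q$ shows
\[
|\tilde x_{i_a}-x_{i_a}| \;=\; \left|\frac{|\tilde d|}{\tilde s}-\frac{|d|}{s}\right| \;\le\; \frac{4\eta_a}{\tilde s_{i_a}} \;\le\; \frac{8\eta_a}{s_{i_a}}
\]
whenever $\eta_a\le s_{i_a}/4$, using $|d|\le s$ in the numerator. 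Plugging in $M=c\sqrt{N}/(\epsilon^6\tau^4)$ together with $s_{i_a}\ge 1/(\alpha N)=\Theta(\tau\epsilon/N)$ from $G_a$, a routine calculation verifies that $8\eta_a/s_{i_a}=O(\tau\epsilon)$ throughout $E_a\cap G_a$.

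Putting it together, and using the trivial bound $|\tilde x_{i_a}-x_{i_a}|\le 1$ off of $E_a\cap G_a$,
\[
\EE\bigl[\,|\tilde x_{i_a}-x_{i_a}|\,\bigr] \;\le\; O(\tau\epsilon) + 2\omega + \Pr[G_a^c] \;=\; O(\tau\epsilon),
\]
so Markov's inequality yields $\Pr[|\tilde x-\bar x|\ge \epsilon/3]\le \tau/3$, and a union bound finishes the proof. The main obstacle is the sensitivity analysis of the ratio $|\tilde d|/\tilde s$; the trick is to jointly exploit (i) the BHMT error scaling as $\sqrt{p_A}$ rather than the worst case, (ii) the favorable weighting of $r$ toward large $s_i$, and (iii) the threshold $1/(\alpha N)$ that discards rare small-$s_{i_a}$ samples. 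The specific choice $M=c\sqrt{N}/(\epsilon^6\tau^4)$ is forced by requiring each of these error contributions to be $O(\tau\epsilon)$.
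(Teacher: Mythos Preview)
Your proposal is correct and follows the same skeleton as the paper: split $|\tilde x-\EE(x)|$ into an empirical-mean piece (Chebyshev, using $n=27/(\tau\epsilon^2)$) and an estimation-error piece; handle the latter by thresholding on the size of $s_i=p_i+q_i$ (the paper uses $\max(p_i,q_i)$, which is equivalent up to a factor of $2$), invoking the BHMT precision guarantee, and doing a sensitivity analysis of the ratio $|d|/s$.

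The one substantive difference is how you aggregate the per-sample estimation errors. The paper uses a union bound: it requires each $|\tilde x_{i_a}-x_{i_a}|<\epsilon/(3n)$ with failure probability at most $O(\tau/n)$, which forces per-call error probability $\omega\sim\tau^2\epsilon^2$ and relative precision $\sim\tau\epsilon^3$; working through BHMT with the ``good'' threshold $\max(p_i,q_i)\gtrsim \tau^2\epsilon^2/N$ is exactly what yields $M=\Theta(\sqrt N/(\epsilon^6\tau^4))$. You instead bound $\EE\bigl[|\tilde x_{i_a}-x_{i_a}|\bigr]=O(\tau\epsilon)$ and apply Markov to the average. This is a legitimate and slightly slicker route, but it is \emph{less} demanding: carrying your own bookkeeping through (with $\omega=\Theta(\tau\epsilon)$, relative precision $\Theta(\tau\epsilon)$, and threshold $s_i\gtrsim \tau\epsilon/N$) only requires $M=\Theta(\sqrt N/(\tau^{5/2}\epsilon^{5/2}))$. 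So your closing remark that the specific $M=c\sqrt N/(\epsilon^6\tau^4)$ is ``forced'' by your analysis is inaccurate---that value is dictated by the paper's union-bound route, and in your argument it is merely (more than) sufficient. This does not affect the correctness of the lemma as stated, since you are analyzing the algorithm with its given $M$.
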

\begin{proof}
Define a random variable
\[
\bar{x}=\frac1n\sum_{a=1}^n x_{i_a},
\]
where $i_1,\ldots,i_n$ is a list of samples generated at the first step of the algorithm.
Note that $\EE(\bar{x})=\EE(x)$ and $\Var{(\bar{x})}=\Var{(x)}/n$.
 As $|p_i-q_i|\le p_i+q_i$ we have $0\le x_i\le 1$ and so one can bound the variance of $x$ as
$\Var{(x)} \le \EE(x^2) \le 1$.
Therefore  $\Var{(\bar{x})}\le 1/n$. Applying  the Chebyshev inequality to $\bar{x}$ one gets
\be
\label{error_bound1}
\prob{|\bar{x}-\EE(x)|\ge \epsilon/3}\le \frac{9\Var{(\bar{x})}}{\epsilon^2} \le \frac{9}{n\epsilon^2}\le \frac{\tau}{3}.
\ee
Let $\tilde{x}$ be the output of $\EstDist{p}{q}{\epsilon}{\tau}$. The union bound implies that
\be
\label{error_bound2}
\prob{|\tilde{x}-\bar{x}|\ge \epsilon/3} \le \prob{\exists a\, : \, |\tilde{x}_{i_a} - x_{i_a}|\ge \epsilon/3n}\le n
\prob{|\tilde{x}_{i} - x_{i}|\ge \epsilon/3n},
\ee
where $i\equiv i_a$ is a sample drawn from $r$.
Therefore it suffices to verify that
\be
\label{error_bound3}
\prob{|\tilde{x}_{i} - x_{i}|\ge \epsilon/3n} \le \frac{2\tau}{3n}.
\ee
Let us say that an element $i$ is {\em bad} iff
\be
\label{bad_element}
\max{(p_i,q_i)} \le \frac{\tau}{3nN} \quad (\mbox{bad element}).
\ee
The probability that $i$ is bad is at most
\[
p_{bad}=\sum_{i \; {\mathrm is \; bad}} r_i \le \frac{\tau}{3n}.
\]
Therefore it suffices to get a bound
\be
\label{error_bound4}
\prob{|\tilde{x}_{i} - x_{i}|\ge \epsilon/3n\,  |\,  i \; \mbox{is good}}\le \frac{\tau}{3n},
\ee
where we conditioned on $i$ being a good (not bad) element.

Let us translate the precision up to which one needs to estimate $x_i$ into a precision up to which one needs to
estimate $p_i$ and $q_i$.
\begin{prop}
Consider a real-valued function $f(p,q)=(p-q)/(p+q)$ where $0\le p,q\le 1$. Assume that
 $|p-\tilde{p}|, |q-\tilde{q}|\le \delta(p+q)$  for some $\delta\le 1/5$. Then
\be
|f(p,q)-f(\tilde{p},\tilde{q})|\le {5\delta}.
\ee
\end{prop}
\omitted{
\begin{proof}
Assume without loss of generality that $p\ge q$. Computing the partial derivatives of $f(p,q)$ one gets
\[
\partial_p f(p,q)= \frac{2q}{(p+q)^2}, \quad \partial_q f(p,q) =-\frac{2p}{(p+q)^2}
\]
both of which have absolute value at most $2/(p+q)$.It follows that
\[
|f(p,q)-f(\tilde{p},\tilde{q})| \le \frac2{\min{\{ p+q,\tilde{p}+\tilde{q}\}}}\, \left( |p-\tilde{p} | + |q-\tilde{q}|\right).
\]
The condition of the lemma implies that $\tilde{p}+\tilde{q}\ge (p+1)(1-\delta)$, so that
\[
|f(p,q)-f(\tilde{p},\tilde{q})|  \le \frac{4\delta}{1-\delta} \le 5\delta.
\]
\end{proof}}
Note that
\[
|\tilde{x}_i - x_i| = |\, |f(\tilde{p}_i,\tilde{q}_i)| - | f(p_i,q_i)||\le | f(\tilde{p}_i,\tilde{q}_i) -  f(p_i,q_i) |.
\]
Since we want to estimate $x_i$ with a precision $\epsilon/3n$, it suffices to estimate $p_i$ and $q_i$
with a precision $\delta (p_i+q_i)\ge \delta \max{(p_i,q_i)}$ where $5\delta=\epsilon/3n$, that is, $\delta=\epsilon/(15n)$.
Summarizing,
\be
\label{error_bound5}
|\tilde{p}_i - p_i|, |\tilde{q}_i - q_i| \le \frac{\epsilon}{15n}\max{(p_i,q_i)} \quad \Rightarrow \quad
|\tilde{x}_i - x_i|\le \frac{\epsilon}{3n}.
\ee
 Thus  it suffices to estimate
$p_i$ and $q_i$ with precision
\be
\label{delta}
\delta \sim \epsilon n^{-1} \max{(p_i,q_i)} \sim \tau \epsilon^3 \max{(p_i,q_i)}.
\ee
We are going to get these estimates by calling
$\EstProb{p}{\{i\}}{M}$ and $\EstProb{q}{\{i\}}{M}$.
The number of queries $M$ has to be chosen sufficiently large such that conditions Eq.~(\ref{BHMT1}) are satisfied for precision
$\delta$ defined in Eq.~(\ref{delta}) and error probability determined by Eq.~(\ref{error_bound4}), that is,
\be
\label{omega}
\omega \sim \tau n^{-1}  \sim \tau^2 \epsilon^2.
\ee
It leads to the condition
\be
\label{error_bound6}
M\ge \Omega\l(\frac{1}{\tau^3 \epsilon^5 \max{(\sqrt{p_i},\sqrt{q_i})}}\r).
\ee
Recall that we are interested in the case when $i$ is good. In this case
$\max{(p_i,q_i)}\ge \tau/(3nN) \sim N^{-1} \tau^2 \epsilon^2$.
Therefore Eq.~(\ref{error_bound6}) is satisfied whenever
\[
M\ge \Omega\l(\frac{1 \, \sqrt{N}}{\tau^4 \epsilon^6}\r).
\]
\end{proof}
Theorem~\ref{thm:statdif} follows directly from Lemma~\ref{lemma:statdif} since
$\EstDist{p}{q}{\epsilon}{\tau}$ makes $O(\sqrt{N})$ queries to the
quantum oracles generating $p$ and $q$.

%%%%%%%%%%%%%%%%%%%%%%%%%%%%%%%%%%%%%%%%%%
%%%%%%%%%%%%%%%%%%%%%%%%%%%%%%%%%%%%%%%%%%
\section{Quantum algorithm for testing Uniformity}
\label{sec:uniform}
%%%%%%%%%%%%%%%%%%%%%%%%%%%%%%%%%%%%%%%%%%
%%%%%%%%%%%%%%%%%%%%%%%%%%%%%%%%%%%%%%%%%%
In this section we prove Theorem~\ref{thm:uniform}. Let $p\in \calD_N$ be an unknown distribution specified by an oracle.
We are promised that either $p$ is the uniform distribution, or $p$ is {\em $\epsilon$-nonuniform}, that is, the $L_1$-distance
between $p$ and the uniform distribution is at least $\epsilon$. The algorithm described below is based on the following simple
observation. Choose some integer $M\ll N$ and
let  $S=(i_1,\ldots,i_M)$ be a list of  $M$ independent samples drawn from the distribution $p$.
Define a random variable $p_S=\sum_{a=1}^M p_{i_a}$.
It coincides with the total probability of all elements in $S$ unless $S$ contains a collision
(that is, $i_a=i_b$ for some $a\ne b$).
 The characteristic property of the uniform distribution is that $p_S=M/N$ with certainty.
 On the other hand, we shall see that for any $\epsilon$-nonuniform distribution $p_S$ takes values
greater than $(1+\delta)M/N$ for some constant $\delta>0$ depending on $\epsilon$ with a non-negligible probability. This observation suggests the
following algorithm for testing uniformity (the constants $K$ and $M$  below will be chosen later).

%SBB: now we reject if S has a collision
\begin{center}
%\begin{figure}[h]
\fbox{\parbox{16cm}{ $\UTest{p}{K}{M}{\epsilon}$
\noindent

Let $S=(i_1,\ldots,i_M)$ be a list of $M$ independent samples drawn from $p$.

Reject unless all elements in $S$ are distinct.

Let $p_S=\sum_{a=1}^M p_{i_a}$ be the total probability of elements in $S$.

Let $\tilde{p}_S$ be an estimate of $p_S$ obtained using $\EstProb{p}{S}{K}$.

If $\tilde{p}_S> (1+\epsilon^2/8)M/N$ then reject. Otherwise accept.
}}
%\end{figure}
\end{center}
This procedure will need to be repeated several times to achieve the desired bound on the error
probability, see the proof of Theorem~\ref{thm:uniform} below.

The main technical result of this section is the following lemma.
%SBB: simplified
\begin{lemma}
\label{lemma:uniform}
Let $p\in \calD_N$ be an $\epsilon$-nonuniform distribution.
Let $S=(i_1,\ldots,i_M)$ be a list of $M$ independent samples drawn from $p$, where
\be
\label{M}
M^3 =\frac{32 N}{\epsilon^4}.
\ee
Let $p_S=\sum_{a=1}^M p_{i_a}$ and $\alpha= 2^8 \epsilon^{-4}$.
Then
\be
\label{pAbound}
\prob{ p_S \ge (1+\epsilon^2/2) \frac{M}{N}} \ge \frac12\, \exp{(-\alpha)}.
\ee
\end{lemma}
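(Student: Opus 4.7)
The strategy is a second-moment method applied to $p_S$, preceded by a lower bound on its mean. First I would note that since $p$ is $\epsilon$-nonuniform, $\|p-u\|_1\ge\epsilon$ where $u$ is the uniform distribution on $[N]$, and Cauchy--Schwarz gives $\|p-u\|_2^2\ge \|p-u\|_1^2/N\ge\epsilon^2/N$. Because $\sum_i(p_i-1/N)=0$, the vector $p-u$ is orthogonal to $u$ in $\ell^2$, so $\|p\|_2^2=\|u\|_2^2+\|p-u\|_2^2\ge(1+\epsilon^2)/N$. Since $i_a$ is $p$-distributed, $\mathbb{E}[p_{i_a}]=\sum_i p_i^2=\|p\|_2^2$, which yields $\mathbb{E}[p_S]=M\|p\|_2^2\ge(1+\epsilon^2)M/N$. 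Hence the target value $(1+\epsilon^2/2)M/N$ lies below the mean by a gap of at least $\epsilon^2 M/(2N)$.

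The next step is to convert this expectation gap into an anticoncentration statement. The natural tool is Paley--Zygmund applied to the nonnegative variable $p_S$ with $\theta$ chosen so that $\theta\,\mathbb{E}[p_S]=(1+\epsilon^2/2)M/N$, giving $1-\theta\ge\epsilon^2/(2(1+\epsilon^2))\ge\epsilon^2/4$. Using the i.i.d.\ structure of the $p_{i_a}$'s, one computes $\mathbb{E}[p_S^2]=M\|p\|_3^3+M(M-1)\|p\|_2^4$, so the Paley--Zygmund ratio $(\mathbb{E}[p_S])^2/\mathbb{E}[p_S^2]$ reduces to $(1+\|p\|_3^3/(M\|p\|_2^4))^{-1}$. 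Were this ratio bounded below by a constant, Paley--Zygmund would immediately deliver a probability bound of order $\epsilon^4$, much stronger than $(1/2)e^{-\alpha}$.

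The main obstacle is that $\|p\|_3^3/(M\|p\|_2^4)$ can be arbitrarily large when $p$ concentrates on a few very heavy elements, so Paley--Zygmund by itself is insufficient. I would handle this with a truncation: fix a threshold $T$ and split the analysis according to whether $\|p\|_\infty\le T$. In the light regime, $\|p\|_3^3\le T\|p\|_2^2$, and Paley--Zygmund controls the ratio by $T/(M\|p\|_2^2)\le TN/M$, which for $T$ of order $\alpha/M$ produces a probability comfortably above $(1/2)e^{-\alpha}$. In the heavy regime, some $j$ satisfies $p_j>T$; if $T$ is chosen so that $T\gtrsim (1+\epsilon^2/2)M/N$, then even a single sample landing on $j$ forces $p_S\ge p_j\ge (1+\epsilon^2/2)M/N$, and the probability that this happens in $M$ draws is $1-(1-p_j)^M$, which the Poisson-like bound $(1-w)^M\ge e^{-Mw/(1-w)}$ pins down.

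The quantitative heart of the argument is to pick $T$ so that these two cases fit together: the prefactor $M^3=32N/\epsilon^4$ and the exponent $\alpha=256/\epsilon^4$ must dovetail so that the light case gives $\ge(1/2)e^{-\alpha}$ via Paley--Zygmund while the heavy case gives the same bound via the direct "hit a heavy element" estimate. I expect the exponential form $e^{-\alpha}$ in the target probability to emerge precisely as $(1-w)^M$ for the critical heavy-mass value $w\sim\alpha/M$. Getting the constants $32$ and $2^8$ to line up---as well as verifying that the cases actually exhaust all $\epsilon$-nonuniform $p$---is the main bookkeeping challenge rather than a conceptual one.
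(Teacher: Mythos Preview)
Your mean computation and the idea of a second-moment bound are exactly what the paper uses. The gap is in the case split. A dichotomy on $\|p\|_\infty$ alone cannot work: for Paley--Zygmund (or Chebyshev) to yield a \emph{constant} lower bound you need the ratio $\|p\|_3^3/(M\|p\|_2^4)\le T/(M\|p\|_2^2)\le TN/M$ to be $O(1)$, which forces $T=O(M/N)$ (equivalently $T=O(1/M^2)$, since $M^3\asymp N$). But with $T$ that small, your heavy case collapses: if the only guarantee is that some $p_j>T$, then the chance of ever sampling $j$ is at most $Mp_j$, which can be as small as $MT=O(M^2/N)=O(N^{-1/3})$, and moreover $p_j>T$ does not even ensure $p_j\ge(1+\epsilon^2/2)M/N$. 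Conversely, pushing $T$ up to order $1/M$ (so that hitting a single heavy element becomes likely) makes $TN/M\asymp N/M^2\to\infty$ and destroys the light-case bound. There is no single threshold that serves both arguments.

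The paper resolves this with a three-way split on the \emph{total mass} of ``big'' elements, defined by $p_i>1/(2M^2)$. If there are no big elements, Chebyshev gives probability $\ge 3/4$ directly. If the big mass $w_{\mathrm{big}}$ exceeds $\alpha/M$, then in expectation more than $\alpha$ of the $M$ samples are big and each contributes at least $1/(2M^2)$, so their \emph{collective} contribution already forces $p_S\ge 2M/N$ with probability $\ge 1/2$; this is the correct replacement for your ``hit one heavy element'' idea. The subtle middle case is $0<w_{\mathrm{big}}\le\alpha/M$: here one \emph{conditions on drawing no big sample at all}, which happens with probability $(1-w_{\mathrm{big}})^M\ge e^{-\alpha}$, and observes that the conditional distribution is still essentially $\epsilon$-nonuniform with no big elements, so the first case applies. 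Your intuition that $e^{-\alpha}$ arises as $(1-w)^M$ with $w\sim\alpha/M$ is right, but it is the probability of \emph{avoiding} the heavy set, not of hitting it.
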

Theorem~\ref{thm:statdif} follows straightforwardly  from  the above lemma and Theorem~\ref{thm:BHMT}.
\begin{proof}[Proof of Theorem~\ref{thm:statdif}]
Let $M$ be chosen as in Eq.~(\ref{M}) and
\[
K=c \frac{e^{\alpha} N^{1/3}}{\epsilon^{4/3}},
\]
where  $c=O(1)$ is a constant to be chosen later. Consider the following algorithm:
%SBB: updated
\begin{center}
\fbox{\parbox{18cm}{
Perform $L=4\exp{(\alpha)}$ independent tests  $\UTest{p}{K}{M}{\epsilon}$.
If at least one of the tests outputs `reject' then reject. Otherwise accept.}}
\end{center}

Let us show that this algorithm rejects any $\epsilon$-nonuniform distribution with
probability at least $2/3$ and accepts the uniform distribution with probability at least $2/3$.

\noindent
{\em Part 1: Any $\epsilon$-nonuniform distribution is rejected with high probability.}
 Let $P_{s}$ be the probability that for at least one of the {\bf UTest}s one has
\be
\label{pA_condition}
p_S \ge (1+\epsilon^2/2) \frac{M}{N}
\ee
Using Lemma~\ref{lemma:uniform} we conclude that
\be
\label{ps}
P_{s} \ge 1-\left(1-\frac1{2e^{\alpha}}\right)^{4e^\alpha} \ge 1-e^{-2} \ge \frac56.
\ee
In what follows we shall focus on a single test $\UTest{p}{K}{M}{\epsilon}$ that satisfies
Eq.~(\ref{pA_condition}) and show that it outputs `reject' with high probability.
Indeed, let $S$ be the sample list generated by this {\bf UTest}.
If $S$ contains a collision, the test outputs `reject'. Otherwise $p_S$ coincides
with the total probability of all elements in $S$.
The test outputs `reject' whenever $p_S$ is estimated with a precision
\be
\label{deltapA}
\delta =p_S \frac{\epsilon^2}{4}.
\ee
 In this case
\[
\tilde{p}_S \ge \l(1-\frac{\epsilon^2}{4}\r) p_S \ge
\l(1-\frac{\epsilon^2}{4}\r)\l(1+\frac{\epsilon^2}{2}\r)\frac{M}{N}
 > \l(1+\frac{\epsilon^2}{8}\r) \frac{M}{N}.
\]
(Here we assumed for simplicity that $\epsilon\le 1$.)
Suppose we want the {\bf UTest} to output `reject' with probability at least $5/6$.
Applying Eq.~(\ref{BHMT1}) with $\delta$ defined in Eq.~(\ref{deltapA}) and
$\omega=1/6$ we arrive at
\be
\label{Kchoice1}
K\ge \frac{c}{\epsilon^2 \sqrt{p_S}}
\ee
for some constant $c=O(1)$. Using Eq.~(\ref{pA_condition}) it suffices to choose
\be
\label{Kchoice2}
K = O\l(  \frac{\sqrt{N}}{\epsilon^2 \sqrt{M}} \r)
= O\l( \frac{ N^{1/3}}{\epsilon^{4/3}}\r)
\ee
Summarizing, if $p$ is an $\epsilon$-nonuniform distribution  it will
be rejected
with probability at least $(5/6)^2 \ge 2/3$.

\vspace{3mm}

%SBB: now we have to worry that the uniform distribution can be rejected because S contains a  collision
\noindent
{\em Part 2: The uniform distribution is accepted with high probability.}
Note that the uniform distribution can be rejected for two possible reasons: (i) for some {\bf UTest}
the sample list $S$ contains a collision; (ii) for some {\bf UTest} the estimate $\tilde{p}_S$
is sufficiently large, $\tilde{p}_S>(1+\epsilon^2/8)\, M/N$. We analyze these two possible sources of errors below.

(i)  For any fixed {\bf Utest}  let $S=(i_1,\ldots,i_M)$ be a list of $M$ samples drawn from $p$.
Let $C$ be the number of collisions in $S$, that is,  the number of pairs
$1\le a<b\le M$ such that $i_a=i_b$. Then,
\[
\EE(C)={M\choose 2} \sum_{i=1}^N p_i^2  \le \frac{M^2}{2N}.
\]
 Markov's inequality implies that $\prob{C\ge 1}\le \EE(C)\le M^2/(2N)$.
Then the probability that at least one
of the {\bf UTests} will find a
collision can be bounded using the union bound
as
\[
P_c\le  \frac{L M^2}{2N} = O\l( \frac1{N^{1/3}}\r)
\]
since we have chosen $M=O(N^{1/3})$ and $L=O(1)$. Thus the error probability associated with finding collisions
can be neglected.

(ii) Let $\tilde{p}_S$ be the estimate of $p_S$ obtained in some fixed {\bf UTest}.
Since $p_S=M/N$ with certainty, the test
outputs `accept' whenever the estimate $\tilde{p}_S$ returned by $\EstProb{p}{S}{K}$ satisfies
$|\tilde{p}_S-p_S|\le \delta$, where
\be
\label{deltapA1}
\delta=\frac{\epsilon^2 M}{8N}.
\ee
Since the total number of {\bf Utest}s is $L=4e^\alpha$, we would like the estimate $\tilde{p}_S$
to have precision $\delta$ with error probability $\omega \leq
\frac{1}{12}e^{-\alpha}$.
Applying Eq.~(\ref{BHMT1}) with $\delta$, $\omega$ defined above and taking into account that
$p_S = M/N$, we find that we can take the number of queries $K$ to be
\be
\label{Kchoice3}
K =  O\l(\frac{\sqrt{p_S}}{\omega \delta} \r)=
O\l( \frac{e^\alpha N^{1/3}}{\epsilon^{4/3}}\r).
\ee
It remains to choose the largest of Eq.~(\ref{Kchoice2}) and Eq.~(\ref{Kchoice3}).
\end{proof}

In the rest of this section we prove Lemma~\ref{lemma:uniform}.
We shall adopt notations introduced in the statement of Lemma~\ref{lemma:uniform}, that is,
the number of samples $M$ is defined by
\[
M^3=32\epsilon^{-4} N,
\]
$\alpha \equiv 2^8 \epsilon^{-4}$,
$S=(i_1,\ldots,i_M)$ is  a list of $M$ independent samples drawn from $p$,
and $p_S=\sum_{a=1}^M p_{i_a}$.

\begin{dfn}
An element $i\in [N]$ is called big iff $p_i>1/(2M^2)$.
\end{dfn}
Define the set $\bg\subset [N]$ of all big elements and their total probability:
\be
\label{big}
\bg=\{ i\in [N]\, : \, p_i >1/(2M^2)\}, \quad
\wbg=\sum_{i\in \bg} p_i.
\ee

We shall start in see subsection~\ref{subs:no} by proving
Lemma~\ref{lemma:uniform} for the special
case when $p$ has no big elements.  The proof is based on Chebyshev's
inequality.  Then we shall leverage this result in
subsection~\ref{subs:few} to show that
distributions with a few big elements (small $\wbg$) also satisfy
Lemma~\ref{lemma:uniform}. Finally in subsection~\ref{subs:many}, we
shall treat distributions with many big elements (large
$\wbg$) using a completely different technique.

%SBB: the proof is simplified
\subsection{Proof of Lemma~\ref{lemma:uniform}: no big elements}
\label{subs:no}
%%%%%%%%%%%%%%%%%%%%%%%%%%%%%%%%%%%%%%%%%%%%%
% no big elements
%%%%%%%%%%%%%%%%%%%%%%%%%%%%%%%%%%%%%%%%%%%%%

\begin{lemma}[\bf No big elements] Suppose $p\in \calD_N$ is $\epsilon$-nonuniform
and has no  big elements.  Then
\be
\label{nobig_bound}
\prob{ p_S \ge \l(1+\frac{\epsilon^2}{2}\r) \frac{M}{N}} \ge \frac34.
\ee
\label{lemma:nobig}
\end{lemma}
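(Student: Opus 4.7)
The plan is to recognize $p_S = \sum_{a=1}^M p_{i_a}$ as a sum of $M$ i.i.d.\ copies of a random variable $X$ that takes value $p_i$ with probability $p_i$, and then apply Chebyshev's inequality to show that $p_S$ exceeds the threshold $(1+\epsilon^2/2)M/N$ with probability at least $3/4$.

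The mean is $\EE(p_S) = M\|p\|_2^2$. To locate it comfortably above the threshold, I would combine Cauchy--Schwarz in the form $\|p-u\|_1^2 \le N\|p-u\|_2^2$ (with $u$ the uniform distribution on $[N]$) with the identity $\|p\|_2^2 = 1/N + \|p-u\|_2^2$; since $p$ is $\epsilon$-nonuniform this yields $\|p\|_2^2 \ge (1+\epsilon^2)/N$ and hence $\EE(p_S) - (1+\epsilon^2/2)M/N \ge \epsilon^2 M/(2N)$. For the variance I would bound $\Var(p_S) = M\Var(X) \le M\,\EE(X^2) = M\sum_i p_i^3 \le M(\max_i p_i)\|p\|_2^2$ and use the hypothesis that there are no big elements, i.e.\ $\max_i p_i \le 1/(2M^2)$, to conclude $\Var(p_S) \le \|p\|_2^2/(2M)$.

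Feeding these two bounds into Chebyshev's inequality is what finishes the argument, and this last step is where I expect the main technical obstacle. If one naively used the coarse gap $\epsilon^2 M/(2N)$ together with the worst-case estimate $\|p\|_2^2 \le 1/(2M^2)$, the resulting Chebyshev bound would grow like $N^{1/3}$ and could not be forced below $1/4$ for large $N$. The resolution is to retain the exact gap $M(\|p\|_2^2 - (1+\epsilon^2/2)/N)$, so that the Chebyshev ratio $\Var(p_S)/\mathrm{gap}^2$ takes the form $\|p\|_2^2 / [2M^3(\|p\|_2^2 - (1+\epsilon^2/2)/N)^2]$. Since this expression is monotonically decreasing in $\|p\|_2^2$ on the admissible range $\|p\|_2^2 \ge (1+\epsilon^2)/N$, its worst case occurs at the lower endpoint; substituting $M^3 = 32N/\epsilon^4$ then yields $\prob{p_S < (1+\epsilon^2/2)M/N} \le (1+\epsilon^2)/16 \le 1/4$, which is exactly \eqref{nobig_bound} (after accounting for the constant regime $\epsilon = O(1)$ of interest to the paper).
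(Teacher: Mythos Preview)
Your proof is correct and follows essentially the same approach as the paper: compute $\EE(p_S)=M\|p\|_2^2$, bound $\|p\|_2^2\ge(1+\epsilon^2)/N$ via Cauchy--Schwarz on $\|p-u\|_1$, bound $\Var(p_S)\le \|p\|_2^2/(2M)$ using $\|p\|_\infty\le 1/(2M^2)$, and conclude with Chebyshev. The only cosmetic difference is in the last step: the paper converts the threshold to a multiplicative deviation $(1-\epsilon^2/4)\EE(p_S)$ (assuming $\epsilon^2\le 1/3$) and then applies Chebyshev with $\langle p|p\rangle\ge 1/N$, whereas you keep the additive gap and invoke monotonicity of $y\mapsto y/(y-c)^2$; both yield the bound $\le 1/4$ in the relevant constant-$\epsilon$ regime.
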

\begin{proof}

One can easily check that
\be
\label{EV}
\EE{(p_S)} = M  \la p|p\ra,  \quad \Var{(p_S)} =M\left(\sum_{i=1}^N p_i^3 -\la p|p\ra^2\right).
\ee

\begin{prop}
\label{prop:2norm}
Suppose $p\in \calD_N$ is $\epsilon$-nonuniform.   Then
\be
\label{2norm}
\la p|p\ra \ge \frac{1+\epsilon^2}{N}.
\ee
\end{prop}
\begin{proof}
Let $u$ be the uniform distribution.
Then $\epsilon\le \|p-u \|_1 \le \sqrt{N} \, \|p-u\|_2 = \sqrt{N} \,
\sqrt{\la p|p\ra - N^{-1}}$ which gives the desired bound.
\end{proof}

Using the proposition and the assumption that $p$ has no big elements we get
\be
\label{bounds2}
\EE{(p_S)}\ge \frac{M}{N}(1+\epsilon^2), \quad \Var{(p_S)} \le M
\|p\|_\infty \la p|p\ra \le \frac1{2M} \la p|p\ra.
\ee

 Chebyshev's inequality implies that
\be
\label{Che}
\prob{|p_S-\EE{(p_S)}| \ge t \EE{(p_S)}}\le \frac{\Var{(p_S)}}{\EE{(p_S)}^2\,  t^2}.
\ee
Assuming for simplicity that $\epsilon^2\le 1/3$ we can use the bound
$(1+\epsilon^2)^{-1} \le 1-3\epsilon^2/4$ and thus
\[
\prob{p_S\le \l(1+\frac{\epsilon^2}{2}\r)\cdot \l(\frac{M}{N}\r)}
 \le \prob{p_S \le \EE{(p_S)} \, \frac{(1+\epsilon^2/2)}{(1+\epsilon^2)}}
\le  \prob{p_S \le (1-\epsilon^2/4) \EE{(p_S)}}.
\]
Using Eq.~(\ref{Che}) with $t=\epsilon^2/4$ and Eqs.~(\ref{EV},\ref{bounds2}) we arrive at
\[
\prob{p_S\le (M/N)(1+\epsilon^2/2)} \le \frac{\la p|p\ra}{2M} \frac1{M^2 \la p|p\ra^2 t^2}\le \frac{8N}{M^3 \epsilon^4} \le \frac14
\]
since $\la p|p\ra \ge N^{-1}$ for any distribution $p\in \calD_N$ and since we have chosen $M^3=32\epsilon^{-4} N$.
%Therefore the probability that $S$ has no collisions {\em and} the simplified random variable $p_S$
 %defined in Eq.~(\ref{p_S'}) satisfies  $p_S\ge (1+\epsilon^2/2)M/N$
 %is at least $1/2$.
\end{proof}

\subsection{Proof of Lemma~\ref{lemma:uniform}: a few big elements}
\label{subs:few}
%%%%%%%%%%%%%%%%%%%%%%%%%%%%%%%%%%%%%%%%%%%%%
% a few big elements
%%%%%%%%%%%%%%%%%%%%%%%%%%%%%%%%%%%%%%%%%%%%%

\begin{lemma}[\bf A few big elements]
 Suppose $p\in \calD_N$ is $\epsilon$-nonuniform
and has only a few  big elements such that
\be
\label{fewbig}
\wbg\le \frac{\alpha}{M}, \quad \alpha \equiv 2^8 \epsilon^{-4}.
\ee
Then
\be
\label{fewbig_bound}
\prob{ p_S \ge (1+\epsilon^2/2) \frac{M}{N}} \ge \frac12\,  \exp{(-\alpha)}.
\ee
\label{lemma:fewbig}
\end{lemma}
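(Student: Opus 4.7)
The plan is to reduce this case to Lemma~\ref{lemma:nobig} by conditioning on the event that no big element is ever sampled. Let $p'$ denote the distribution on $[N]\setminus \bg$ obtained by removing the big elements and renormalizing, so $p'_i = p_i/(1-\wbg)$ for $i \notin \bg$. Let $E$ be the event that $i_1,\ldots,i_M \in [N]\setminus \bg$. Conditional on $E$, the $i_a$ are i.i.d.\ draws from $p'$, and $p_S = (1-\wbg)\sum_{a=1}^M p'_{i_a}$. The goal is to bound both $\Pr[E]$ from below and $\Pr[\,p_S \ge (1+\epsilon^2/2)M/N \mid E\,]$ from below, then multiply.

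First I would lower bound $\Pr[E] = (1-\wbg)^M$. Using $\wbg \le \alpha/M$ together with the inequality $1-x \ge e^{-x-x^2}$ valid for $x \le 1/2$, this gives $\Pr[E] \ge \exp(-\alpha - \alpha^2/M)$. With $M = \Theta(N^{1/3}/\epsilon^{4/3})$ and $\alpha = 2^8\epsilon^{-4}$, the correction $\alpha^2/M$ is $o(1)$ for $N$ past a constant threshold (the lemma is trivial for $N$ below that threshold since the claimed probability is then bounded by a constant), so $\Pr[E] \ge \tfrac{2}{3}e^{-\alpha}$.

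Next I would apply Lemma~\ref{lemma:nobig} to $p'$ on the ground set $[N]\setminus \bg$ of size $N' = N - |\bg|$. Three ingredients must be checked. (a)~\emph{Nonuniformity of $p'$:} Viewing $p$ and $p'$ as distributions on $[N]$ (with $p'$ supported on $[N]\setminus\bg$), one has $\|p-p'\|_1 = 2\wbg \le 2\alpha/M$ and $\|u_{[N]} - u_{[N]\setminus\bg}\|_1 = 2|\bg|/N$; since $|\bg| \le 2M^2$ and $M^2/N = o(1)$ by the choice of $M$, the triangle inequality gives $\|p' - u_{[N]\setminus\bg}\|_1 \ge \epsilon - o(1) \ge \epsilon/2$, which is what Lemma~\ref{lemma:nobig} needs up to constants. (b)~\emph{No big elements under $p'$:} by construction $\|p'\|_\infty \le 1/(2M^2(1-\wbg))$, which exceeds the strict threshold $1/(2M^2)$ only by the factor $1/(1-\wbg) = 1 + o(1)$; inspecting the proof of Lemma~\ref{lemma:nobig}, this slack enters only through the bound on $\Var(p_S)$ and can be absorbed into the Chebyshev constant. (c)~\emph{Threshold:} the target $(1+\epsilon^2/2)M/N$ is no larger than the analogous target $(1-\wbg)(1+\epsilon^2/4)M/N'$ coming from the application of Lemma~\ref{lemma:nobig} to $p'$, since $(1-\wbg)N/N'$ and the relaxed nonuniformity parameter introduce only lower order corrections. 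So conditional on $E$, $p_S \ge (1+\epsilon^2/2)M/N$ with probability at least $3/4$.

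Combining the two estimates gives $\Pr[p_S \ge (1+\epsilon^2/2)M/N] \ge \Pr[E]\cdot \tfrac34 \ge \tfrac12 e^{-\alpha}$, which is the desired bound. The main obstacle is the bookkeeping in step~(b) and~(c): renormalizing by $1/(1-\wbg)$ simultaneously shifts the ``big'' threshold, inflates the elementwise probabilities, and tilts the target quantile, so one must verify that all three perturbations are small enough to survive the margin built into Lemma~\ref{lemma:nobig} (which only concluded probability $3/4$, not $1 - o(1)$). A cleaner alternative would be to redefine ``big'' at threshold $1/(4M^2)$ instead of $1/(2M^2)$ from the start so that $p'$ automatically satisfies the hypothesis of Lemma~\ref{lemma:nobig} with room to spare; this rescales constants but preserves the form of the bound.
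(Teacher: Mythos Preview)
Your proposal is correct and follows essentially the same approach as the paper: condition on the event that no big element is sampled, show this has probability at least roughly $e^{-\alpha}$, observe that the conditional sample distribution is the renormalized $p'$, verify that $p'$ is still $\epsilon$-nonuniform and has no big elements up to $o(1)$ corrections, and then invoke Lemma~\ref{lemma:nobig}. The only cosmetic difference is that you restrict to the ground set $[N]\setminus\bg$ of size $N'$ while the paper keeps $p'$ as a distribution on $[N]$ with zeros on $\bg$; your bookkeeping in (b) and (c) is in fact slightly more careful than the paper's, which simply absorbs the $(1-\wbg)$ factors into ``neglecting $O(N^{-1/3})$ corrections.''
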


\begin{proof}
Let $S=(i_1,\ldots,i_M)$ be a list of $M$ samples drawn from $p$.
We can get a constant lower bound on the probability that $S$ contains no big elements:
\bea
\label{prob_nobig}
\prob{S\cap \bg =\emptyset} &=& (1-\wbg)^M \approx \exp{(-M\wbg)}
\ge e^{-\alpha}.
\eea
(Strictly speaking, one gets a lower bound $e^{-\alpha}(1-o(1))$.)
It suffices to show that $p_S\ge (1+\epsilon^2/2) M/N$ with probability at least $1/2$ conditioned on $S$
having no big elements.

The conditional distribution of the random variable $p_S$ given that $S$ contains no
big elements can be obtained by setting the probability of all big elements to zero and
renormalizing $p$ by a factor $(1-\wbg)^{-1}$. In other words,  we can repeat all arguments of
Lemma~\ref{lemma:nobig} if we replace $p$ by a new distribution $p'\in \calD_N$ such that
\be
\label{pprime}
p'_i=\left\{ \ba{rcl} \frac{p_i}{(1-\wbg)} &\mbox{if} & i\notin \bg, \\
0 &\mbox{if} & i\in \bg.\\
\ea\right.
\ee
We have to check that $p'$ is also $\epsilon$-nonuniform.
\begin{prop}
The distribution $p'$ is $\epsilon'$-nonuniform, where $\epsilon'\ge \epsilon-O(N^{-1/3})$.
\end{prop}

\begin{proof}
\[
\|p-p'\|_1 =\sum_{i\in \bg} p_i + \sum_{i\notin \bg} \left[ (1-\wbg)^{-1}-1\right] p_i  \le \wbg  + \frac{\wbg}{(1-\wbg)}
=O(N^{-1/3}).
\]
Let $u$ be the uniform distribution.
Using the triangle inequality we get
\[
\|p'-u\|_1 \ge \| p-u\|_1 -\|p-p'\|_1 \ge \epsilon -O(N^{-1/3}).
\]
\end{proof}
To simplify notations
we shall neglect the correction of order $N^{-1/3}$ and assume that $p'$ is $\epsilon$-nonuniform.
 By construction,
\[
\|p'\|_\infty \le \frac1{(1-\wbg) 2M^2}=1/(2M^2)  +O(N^{-1}).
\]
Neglecting the correction of order $N^{-1}$ we can assume that $p'$ has no big elements.
Then Lemma~\ref{lemma:nobig} implies that
$p_S'\ge (1+\epsilon^2/2) M/N$ with probability at least $3/4$.
Combining it with Eq.~(\ref{prob_nobig}) we arrive at Eq.~(\ref{fewbig_bound}).
\end{proof}

\subsection{Proof of Lemma~\ref{lemma:uniform}: many big elements}
\label{subs:many}

%%%%%%%%%%%%%%%%%%%%%%%%%%%%%%%%%%%%%%%%%%%%%
% many big elements
%%%%%%%%%%%%%%%%%%%%%%%%%%%%%%%%%%%%%%%%%%%%%

\begin{lemma}[\bf Many big elements] Suppose $p$ is $\epsilon$-nonuniform and has many big elements
such that
\be
\label{case1}
\wbg>\frac{\alpha}{M}, \quad \alpha\equiv 2^8 \epsilon^{-4}.
\ee
Then
\be
\prob{p_S \ge 2\frac{M}{N}} \ge \frac12.
\ee
\end{lemma}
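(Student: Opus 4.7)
The plan is to exploit a deterministic lower bound: the quantity $p_S = \sum_{a=1}^M p_{i_a}$ is at least $(\min_{i\in \bg} p_i)$ times the number of samples $i_a$ lying in the set $\bg$ of big elements. Letting $B$ denote this count, the definition of big, namely $p_i > 1/(2M^2)$ for $i\in\bg$, immediately yields $p_S \ge B/(2M^2)$ with no probabilistic input. Consequently it suffices to argue that $B$ is sufficiently large with probability at least $1/2$, which reduces the lemma to a standard concentration statement about a binomial.

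By construction $B$ is $\mathrm{Binomial}(M,\wbg)$, so $\EE(B) = M\wbg > \alpha$ by the standing hypothesis $\wbg > \alpha/M$. The multiplicative Chernoff lower-tail bound
\[
\prob{B \le (1-\eta)\, \EE(B)} \le \exp\!\l(-\tfrac{\eta^2}{2}\,\EE(B)\r), \qquad 0<\eta<1,
\]
applied with $\eta = 1/2$ gives $\prob{B \le \alpha/2} \le \prob{B \le \EE(B)/2} \le \exp(-\EE(B)/8) < \exp(-\alpha/8)$. Since $\epsilon\le 1$ forces $\alpha = 2^8\epsilon^{-4} \ge 2^8$, this probability is bounded by $\exp(-32)$, which is far smaller than $1/2$. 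Hence $B > \alpha/2$ holds with probability at least $1/2$.

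To close the loop I verify the arithmetic that matches the two thresholds. Using $M^3 = 32 N/\epsilon^4$ one finds $2M/N = 64/(\epsilon^4 M^2)$, and likewise $\alpha/(4M^2) = 2^8/(4\epsilon^4 M^2)$ equals the same quantity. Therefore whenever $B \ge \alpha/2$ is realized we obtain $p_S \ge B/(2M^2) \ge \alpha/(4M^2) = 2M/N$, which is precisely the inequality claimed in the lemma.

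No serious obstacle is anticipated, in sharp contrast with Lemmas~\ref{lemma:nobig} and~\ref{lemma:fewbig}. Here the many-big-elements hypothesis forces $p_S$ to be large purely through the count of big-element hits, so the analysis reduces to one-sided concentration of a binomial; because $\EE(B)$ exceeds the target threshold $\alpha/2$ by a factor of two while $\alpha$ itself is enormous, even a crude Chernoff estimate yields much more than the required probability $1/2$. Notably, the $\epsilon$-nonuniformity assumption is not invoked in this subcase — the hypothesis $\wbg > \alpha/M$ alone suffices.
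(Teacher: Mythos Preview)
Your proof is correct and follows essentially the same route as the paper: reduce to the count $B$ of big-element samples via the deterministic bound $p_S\ge B/(2M^2)$, verify that $B\ge \alpha/2$ suffices via the identity $\alpha/(4M^2)=2M/N$, and then show this threshold is met with probability at least $1/2$ by concentration of a binomial with mean $M\wbg>\alpha$. The only difference is cosmetic: the paper uses Chebyshev (obtaining $\prob{B<\alpha/2}\le 4/\alpha\le 1/2$), whereas you use a multiplicative Chernoff bound, which of course gives a far stronger tail estimate than is needed here.
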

\begin{proof}
Let $S=(i_1,\ldots,i_M)$ be a list of $M$ independent samples drawn from $p$.
Since each big element contained in $S$ contributes at least $1/(2M^2)$ to $p_S$,
the inequality $p_S\ge 2M/N$ is satisfied whenever $S$ contains at least $n$  big elements
where
\[
\frac{n}{2M^2} \ge \frac{2M}{N}.
\]
Since  $M^3=2^5 \epsilon^{-4}N$, we can choose
\be
\label{nb}
n=2^7 \epsilon^{-4}= \alpha/2.
\ee
The total number of samples $a\in [M]$ such that $i_a$ is big can be represented as
$\xi=\sum_{i=1}^M \xi_i$, where
$\xi_i\in \{0,1\}$ is a random variable such that $\xi_i=1$ iff $i$ is a big element.
Note that $\EE(\xi)=M\wbg >\alpha$.
 Using Chebyshev's inequality we get
\be
\prob{\xi< n} \le \prob{ |\xi-\EE(\xi) |\ge  \frac12\, \EE(\xi) } \le \frac{ 4\Var{(\xi)}}{\EE(\xi)^2} \le \frac{4}{\EE(\xi)}
\le \frac{4}{\alpha}\le \frac12.
\ee
\end{proof}

%%%%%%%%%%%%%%%%%%%%%%%%%%%%%%%%%%%%%%%%%%%%
%%%%%%%%%%%%%%%%%%%%%%%%%%%%%%%%%%%%%%%%%%%%
\section{Quantum algorithm for testing orthogonality}
\label{sec:orthog}
%%%%%%%%%%%%%%%%%%%%%%%%%%%%%%%%%%%%%%%%%%%%
%%%%%%%%%%%%%%%%%%%%%%%%%%%%%%%%%%%%%%%%%%%%
Consider distributions $p,q\in \calD_N$ and
let $S=(i_1,\ldots,i_M)$ be a list of $M$  independent samples drawn from $p$.
Let $A\subseteq [N]$ be the set of all elements that appear in $S$ at least once.
Define the {\em collision probability}
\[
q_A=\sum_{i\in A} q_i.
\]
Note that $q_A$ is a deterministic function of $A$, so the probability distribution of $q_A$
is determined by probability distribution of $A$ (which depends on $p$ and $M$).
For a fixed $A$ the variable $q_A$ is the probability  that a sample drawn from $q$ belongs to $A$.

Clearly if $p$ and $q$ are orthogonal then $q_A=0$ with probability $1$.
On the other hand, if $p$ and $q$ have a constant overlap, we will show that $q_A$
takes values of order $M/N$ with  constant probability. Specifically, we shall prove the following lemma.
%SBB: the old version had a mistake in Proposition 4. Fixing this mistake lead to extra factors 1/2 in several places.
\begin{lemma}
\label{lemma:orthog}
Consider a pair of distributions $p,q\in \calD_N$ such that
$\|p-q\|_1\le 2-\epsilon$. Let $q_A$ be a collision probability
constructed using $M$ samples.
Suppose  $M\ge 2^9\epsilon^{-2}$.
Then
\be
\label{q_A}
\prob{q_A\ge  \frac{\epsilon^3 M}{2^{11}N}} \ge \frac12.
\ee
\end{lemma}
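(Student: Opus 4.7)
The plan is to show that $q_A$ has expectation much larger than the stated threshold and then invoke a concentration argument. The first ingredient is to convert the hypothesis into an overlap bound: using the identity $\|p-q\|_1 = 2 - 2\sum_i \min(p_i,q_i)$, we get $\sum_i \min(p_i,q_i) \ge \epsilon/2$. Combined with $p_i q_i \ge \min(p_i,q_i)^2$ and Cauchy--Schwarz, this yields $\sum_i p_i q_i \ge \epsilon^2/(4N)$, a quantitative version of the fact that $p$ and $q$ have substantial mass on common elements.

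Next I would analyze $q_A = \sum_i q_i Z_i$ where $Z_i = \mathbf{1}[i\in A]$, so $\EE[Z_i] = 1-(1-p_i)^M$. Using the elementary bound $1-(1-p_i)^M \ge \tfrac{1}{2}\min(Mp_i,1)$ and the pointwise inequality $q_i\min(Mp_i,1) \ge M\min(p_i,q_i,1/M)^2$, Cauchy--Schwarz gives
\[
\EE[q_A] \;\ge\; \frac{M}{2N}\Bigl(\sum_i \min(p_i,q_i,1/M)\Bigr)^2 .
\]
To lower bound $\sum_i \min(p_i,q_i,1/M)$, I would split the overlap into the "light" part ($\min(p_i,q_i) \le 1/M$), where the clipping is harmless, and the "heavy" part ($\min(p_i,q_i) > 1/M$), where each contributing element is in $A$ with probability at least $1-1/e$ and can instead be handled by a direct argument $\EE[q_A] \ge (1-1/e)\sum_{i\text{ heavy}} q_i$. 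Either case yields $\EE[q_A] = \Omega(M\epsilon^2/N)$, which exceeds the target $\tau := \epsilon^3 M/(2^{11}N)$ by a factor of order $1/\epsilon$ (using $M \ge 2^9/\epsilon^2$).

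For the probability bound I would exploit the negative correlation of the indicators, which follows from $(1-p_i-p_j)^M \le (1-p_i)^M(1-p_j)^M$ and hence $\mathrm{Cov}(Z_i,Z_j) \le 0$. This gives $\Var[q_A] \le \sum_i q_i^2\EE[Z_i] \le \|q\|_\infty \EE[q_A]$. Provided $\|q\|_\infty \le \EE[q_A]/16$, Chebyshev's inequality yields $\Pr[q_A \ge \EE[q_A]/2] \ge 1/2$, and since $\EE[q_A]/2 \ge \tau$ we are done.

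The main obstacle is the remaining case in which some element $i^\ast$ has $q_{i^\ast}$ comparable to $\EE[q_A]$, so that Chebyshev alone is too weak. I would handle this by a dichotomy on $i^\ast$. If $p_{i^\ast} \ge 1/M$ then $\Pr[i^\ast \in A] \ge 1 - 1/e > 1/2$, and a single sampled occurrence of $i^\ast$ already gives $q_A \ge q_{i^\ast} \ge \tau$. Otherwise $p_{i^\ast} < 1/M$, so $i^\ast$ contributes at most $q_{i^\ast}\cdot Mp_{i^\ast} \le q_{i^\ast}$ to $\EE[q_A]$ in a way that is easily absorbed; removing $i^\ast$ from the analysis strictly decreases $\|q\|_\infty$ without significantly reducing the expectation bound, and iterating (or a single bookkeeping step) reduces to the Chebyshev regime. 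Putting the pieces together gives the desired bound $\Pr[q_A \ge \epsilon^3 M/(2^{11}N)] \ge 1/2$.
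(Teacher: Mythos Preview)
Your approach is genuinely different from the paper's, and the first two-thirds of it is solid: the overlap bound $\sum_i\min(p_i,q_i)\ge\eps/2$, the pointwise inequality $q_i\min(Mp_i,1)\ge M\min(p_i,q_i,1/M)^2$, the Cauchy--Schwarz step giving $\EE[q_A]\ge \frac{M}{2N}(\sum_i\min(p_i,q_i,1/M))^2$, and the negative-correlation bound $\Var[q_A]\le\|q\|_\infty\EE[q_A]$ are all correct and rather elegant.

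The gap is in the last paragraph. When $q_{i^\ast}$ is comparable to $\EE[q_A]$ but $p_{i^\ast}<1/M$, your ``remove and iterate'' step is not justified. The contribution of $i^\ast$ to $\EE[q_A]$ is $q_{i^\ast}(1-(1-p_{i^\ast})^M)$, which for $p_{i^\ast}$ just below $1/M$ can be a constant fraction of $\EE[q_A]$ (take $q_{i^\ast}\approx 1$, $p_{i^\ast}=1/(2M)$; then $i^\ast$ contributes $\approx 1-e^{-1/2}$, which may dominate). More seriously, the iteration is circular: the removal threshold $\EE[q_A]/8$ moves after each removal. If instead you fix a threshold $t$ up front, you need simultaneously $t\ge\tau$ (so that hitting a single heavy element suffices), $1/(tM)\le\eps/8$ (so that removing all $\le 1/t$ heavy-but-light-$p$ elements costs at most $\eps/8$ of the overlap), and $t\le \EE[q_A']/8=O(M\eps^2/N)$ (so that Chebyshev applies afterwards). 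These three constraints force $8/(M\eps)\le t\le O(M\eps^2/N)$, hence $N=O(M^2\eps^3)$, which is \emph{not} part of the hypotheses. So as written the argument does not close for large $N$.

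For comparison, the paper sidesteps the second-moment route entirely. It defines $B=\{i:q_i<\tfrac{\eps}{4}p_i\}$ and $C=\{i:p_i\le\tfrac{\eps}{32N}\}$, shows $p_B\le 1-\eps/4$ and $p_C\le\eps/32$, and then applies Chernoff--Hoeffding to the \emph{sample counts} landing in $B$ and in $C$. With probability $\ge 1/2$ at least $(\eps/16)M$ samples lie in $B^c\cap C^c$; each such element $i$ satisfies $q_i\ge\tfrac{\eps}{4}p_i\ge\tfrac{\eps^2}{128N}$, which yields $q_A\ge\eps^3M/(2^{11}N)$ directly. Because the concentration is on counts of i.i.d.\ Bernoulli variables rather than on the weighted sum $q_A$, no assumption on $\|q\|_\infty$ is needed, and the large-$N$ regime causes no trouble.
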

It suggests the following algorithm for testing orthogonality.
\begin{center}
%\begin{figure}[h]
\fbox{\parbox{16cm}{ $\OTest{p}{q}{M}{K}$
\noindent

Let $S=\{i_1,\ldots,i_M\}$ be a list of $M$ independent samples drawn from $p$.

Let $A\subseteq [N]$ be the set of elements that appear in $S$ at least once.

Let $q_A=\sum_{i\in A} q_i$ be the total probability of elements in $A$ with respect to $q$.

Let $\tilde{q}_A$ be estimate of $q_A$ obtained using $\EstProb{q}{A}{K}$.

If $\tilde{q}_A\ge \frac{\epsilon^3 M}{2^{12} N}$ then reject. Otherwise accept.
}}
%\end{figure}
\end{center}
We note that if $q_A=0$ then $\tilde{q}_A=0$ with certainty (see
Theorem~\ref{thm:BHMT}) and so {\bf OTest}  accepts any pair of
orthogonal distributions with certainty.
Theorem~\ref{thm:orthog} is a direct consequence of the following lemma.
\begin{lemma}
Choose
\be
\label{KM}
M=K=O\l( \frac{N^{1/3}}{\epsilon}\r).
\ee
Then $\OTest{p}{q}{M}{K}$  rejects any distributions $p,q\in \calD_N$ such that $\|p-q\|_1\le 2-\epsilon$ with
probability at least $1/4$.
\end{lemma}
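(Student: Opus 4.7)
The plan is to chain Lemma~\ref{lemma:orthog} with Theorem~\ref{thm:BHMT}: first argue that the (random) collision probability $q_A$ is at least $T_1:=\epsilon^3 M/(2^{11}N)$ with probability $\ge 1/2$, and then show that the quantum amplitude-estimation subroutine returns $\tilde{q}_A\ge T_1/2$ (the rejection threshold used in $\OTest$) with probability $\ge 1/2$. Multiplying these two independent events yields the advertised rejection probability $\ge 1/4$.

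First I would verify the hypothesis $M\ge 2^9\epsilon^{-2}$ of Lemma~\ref{lemma:orthog}: with $M=\Theta(N^{1/3}/\epsilon)$ this is automatic once $N$ is sufficiently large. Lemma~\ref{lemma:orthog} then produces an event $E_1$, depending only on the classical sample list $S$, with $\prob{E_1}\ge 1/2$ on which $q_A\ge T_1$; we condition on $E_1$ throughout.

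Next I would invoke Theorem~\ref{thm:BHMT} with target precision $\delta=q_A/2$ and failure probability $\omega=1/2$. The two conditions in~(\ref{BHMT1}) then reduce to $K=\Omega(1/\sqrt{q_A})$, whose right-hand side is largest at the smallest admissible value $q_A=T_1$, giving $K=\Omega(1/\sqrt{T_1})=\Omega(\sqrt{N/M}/\epsilon^{3/2})$. Substituting $M=\Theta(N^{1/3}/\epsilon)$ yields $\sqrt{N/M}=O(N^{1/3}\sqrt{\epsilon})$ and hence $K=\Omega(N^{1/3}/\epsilon)$, matching the scaling in~(\ref{KM}). With these parameters Theorem~\ref{thm:BHMT} guarantees $|\tilde{q}_A-q_A|\le q_A/2$ with probability at least $1/2$ over the randomness of the quantum subroutine; this event is independent of $E_1$, and on their intersection one has $\tilde{q}_A\ge q_A/2\ge T_1/2=\epsilon^3 M/(2^{12}N)$, so $\OTest$ rejects.

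The main obstacle is a conceptual one rather than computational: Theorem~\ref{thm:BHMT} is stated with a fixed parameter $\delta$, whereas we would choose $\delta=q_A/2$ only after observing the realized value of $q_A$. This is legitimate because the theorem constructs a single estimator $\tilde{q}_A$ whose concentration bound holds simultaneously for every pair $(\delta,\omega)$ satisfying~(\ref{BHMT1}); we simply instantiate the bound a posteriori at the realized $q_A$, so no adaptivity is actually needed. Once this is recognized, the proof reduces to the scaling computation above, which exploits the fact that the choice $M\sim K\sim N^{1/3}/\epsilon$ is precisely balanced so that BHMT's worst-case condition at $q_A=T_1$ remains satisfied.
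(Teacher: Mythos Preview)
Your proposal is correct and follows essentially the same route as the paper: both apply Lemma~\ref{lemma:orthog} to get $q_A\ge \epsilon^3 M/(2^{11}N)$ with probability $\ge 1/2$, then invoke Theorem~\ref{thm:BHMT} with $\delta=q_A/2$ and $\omega=1/2$ to obtain the requirement $K=\Omega(1/\sqrt{q_A})=\Omega(N^{1/2}\epsilon^{-3/2}M^{-1/2})$, and balance this against $M$ to arrive at $M=K=\Theta(N^{1/3}/\epsilon)$. Your explicit discussion of why instantiating $\delta$ at the realized $q_A$ is legitimate is a helpful addition that the paper leaves implicit.
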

\begin{proof}
According Eq.~(\ref{q_A}), $q_A \geq \epsilon^3 M/(2^{11} N)$ with
probability $\geq 1/2$. When this holds, the algorithm rejects whenever
\[
|\tilde{q}_A - q_A|\le \frac{q_A}2
\]
since this implies $\tilde{q}_A \ge q_A/2 \ge \epsilon^3 M/(2^{12} N)$.
Applying Theorem~\ref{thm:BHMT}  with precision $\delta=q_A/2$ and
error probability $\omega =1/2$, we find (according to
Eq.~(\ref{BHMT1})), that $K$ should be
\be
\label{Kchoice4}
K\ge \Omega\l(\frac{1}{\sqrt{q_A}}\r)
\ee
Taking into account Eq.~(\ref{q_A}) it suffices to choose
\[
K=\Omega\l(\frac{N^{1/2}}{\epsilon^{3/2} M^{1/2}}\r)
\]
to guarantee that {\bf Otest} outputs `reject' with probability at least $(1/2)\cdot (1/2) =1/4$.
Minimizing the total number of queries $K+M$ we arrive at Eq.~(\ref{KM}).
\end{proof}

In the rest of this section we prove Lemma~\ref{lemma:orthog}.
\begin{proof}
Begin
by defining two sets of indices:
\bal B & \equiv \{ i : q_i < \frac{\eps}{4} p_i\} \\
C & \equiv \{ i : p_i \leq \frac{\eps}{32}N^{-1}\}
\eal
Let $B^c,C^c$ denote the complements of $B$ and $C$ respectively.
We will prove that
\be \prob{|A \cap B^c \cap C^c| \geq \frac{\eps}{16}M} \geq 1/2
\label{eq:ABC-bound},\ee
 which will imply the Lemma since
\be
q_A\ge \sum_{i\in A\cap B^c\cap C^c} q_i \ge \frac{\epsilon}4\;  \sum_{i\in A\cap B^c\cap C^c } p_i\ge
 \frac{\epsilon^2}{2^7 N} \, |A\cap B^c\cap C^c|.
\ee

First, we show that $|A\cap B|$ is likely to not be too big.  Observe
that $q_B < \frac{\eps}{4} p_B \leq \frac{\eps}{4}$.
Next use the fact that $\frac{1}{2}\|p-q\|_1 = \max_{U\subset [N]} p_U-q_U \leq
1-\frac{\eps}{2}$ to bound $p_B \leq 1- \frac{\eps}{2} +
\frac{\eps}{4} = 1- \frac{\eps}{4}$.  Now we state a Chernoff-Hoeffding
bound.
\begin{lemma}\label{lem:chernoff}
Let $X_1,\ldots,X_M$ be independent $0,1$ random
  variables with $X\equiv \sum_{i=1}^M X_i$.  Then
  for any $\delta>0$,
 \be
 \prob{X \geq \expect{X} + M\delta} \leq \exp(-2M\delta^2).
 \ee
\end{lemma}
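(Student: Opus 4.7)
The plan is to apply the standard Cram\'er--Chernoff moment generating function method, combined with Hoeffding's lemma for bounded random variables. First, I would observe that for any $t>0$, Markov's inequality applied to the nonnegative random variable $e^{tX}$ yields
\[
\prob{X \geq \expect{X} + M\delta} \leq e^{-t(\expect{X}+M\delta)}\, \expect{e^{tX}}.
\]
Since the $X_i$ are independent, the moment generating function factorizes as $\expect{e^{tX}} = \prod_{i=1}^M \expect{e^{tX_i}}$, so it suffices to bound each factor separately.

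The key ingredient is Hoeffding's lemma: for a centered random variable $Y$ supported in $[a,b]$ with $\expect{Y}=0$, one has $\expect{e^{tY}}\leq \exp(t^2(b-a)^2/8)$. Applying this to $Y_i = X_i - \expect{X_i}$, which lies in an interval of length $1$ (namely $[-\expect{X_i},\, 1-\expect{X_i}]$), yields $\expect{e^{t(X_i-\expect{X_i})}} \leq \exp(t^2/8)$. Multiplying these bounds over $i=1,\ldots,M$ and substituting back into the Markov estimate, I obtain
\[
\prob{X \geq \expect{X} + M\delta} \leq \exp\l(Mt^2/8 - tM\delta\r).
\]

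Finally, I would optimize over $t>0$: the exponent is a quadratic in $t$ minimized at $t = 4\delta$, which produces the claimed bound $\exp(-2M\delta^2)$.

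The only nontrivial ingredient is Hoeffding's lemma itself, which I would establish via the convexity estimate $e^{tY}\leq \frac{b-Y}{b-a}e^{ta}+\frac{Y-a}{b-a}e^{tb}$; taking expectations reduces the problem to showing $\varphi(u)\equiv -up + \log(1-p+pe^u) \leq u^2/8$ for $u = t(b-a)$ and $p = -a/(b-a)$, which follows from a second-order Taylor expansion using $\varphi(0)=\varphi'(0)=0$ together with the elementary bound $\varphi''(u)\leq 1/4$. Since this is a classical result with a completely routine proof, I expect no real obstacle.
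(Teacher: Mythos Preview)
Your proof is correct and follows the standard Cram\'er--Chernoff/Hoeffding argument. The paper does not actually prove this lemma; it simply states it as a well-known Chernoff--Hoeffding bound and uses it as a black box, so your write-up supplies strictly more detail than the original.
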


Recall that $A$ consists of the unique elements of
$S=\{i_1,\ldots,i_M\}$.  For $j=1,\ldots,M$, define $X_j = 1$ if
$i_j\in B$ and $X_j=0$ if not.  Then $|A\cap B| \leq \sum_{j=1}^M
X_j$, with the possibility of an inequality in case there are
repeats.  We can now use \lemref{chernoff}
with $\expect{X_j}=p_B\leq 1-\eps/4$ and $\delta=\epsilon/8$
to prove that
\be \prob{|A \cap B| \geq  \l(1-\frac{\eps}{8}\r)M} \leq
\exp\l(-2M\l(\frac{\eps}{8}\r)^2\r) = \exp\l(-\frac{M\eps^2}{32}\r)
\label{eq:AB-bound}.\ee

Next, we observe that $p_C \leq {\eps}/{32}$.  We can use the same
method to show that $|A\cap C|$ is likely to not be too big.  This
time we define $X_j = 1$ iff $i_j\in C$, so that
$|A\cap C| \leq \sum_{j=1}^M X_j$ and
$\expect{X_j}=p_C \leq \eps/16$. Setting $\delta=\epsilon/32$ we get
\be\prob{|A\cap C| \geq \frac{\eps}{16}M} \leq
\exp\l(-\frac{M\eps^2}{2^9}\r)
\label{eq:AC-bound}.\ee

When $M\geq 2^{9}/\eps^2$, we can combine \eq{AB-bound} and \eq{AC-bound}
to find that with probability $\geq 1/2$, both
$|A \cap B^c| \geq \frac{\eps}{8}M$ and
$|A\cap C^c|\geq (1-\frac{\eps}{16})M$.  Thus
$|A \cap B^c \cap C^c| \geq \frac{\eps}{16}M$ with probability at
least $1/2$.   This establishes \eq{ABC-bound}, and completes the
proof of the lemma.
\end{proof}

%%%%%%%%%%%%%%%%%%%%%%%%%%%%%%%%%%%%%%%%%%%%
%%%%%%%%%%%%%%%%%%%%%%%%%%%%%%%%%%%%%%%%%%%%
\section{Lower bounds}
\label{sec:lower}
%%%%%%%%%%%%%%%%%%%%%%%%%%%%%%%%%%%%%%%%%%%%
%%%%%%%%%%%%%%%%%%%%%%%%%%%%%%%%%%%%%%%%%%%%

\subsection{Sampling vs query complexity}
Let $p\in \calD_N$ be any distribution and $O\, :\, [S] \to [N]$ be an oracle generating $p$.
Recall that $p_i$ coincides with the fraction of inputs $s\in [S]$ such that $O(s)=i$.
 It does not matter which particular inputs $s$ are mapped to $i$. The
 only thing that matters is the number of such inputs. Therefore one
 can choose an arbitrary permutation of inputs $\sigma\, :\, [S]\to
 [S]$
and construct a new oracle $O'=O\circ \sigma$ that generates the same distribution $p$.
We shall see below that if a classical testing algorithm $\calA$ gives a correct answer with high probability for any choice of $S$ and $\sigma$
then $\calA$ cannot take any advantage from making adaptive queries to $\calO$. Let us transform $\calA$
into a `sampling' algorithm $\calA_{s}$ such that each query made in $\calA$ is replaced by a random query drawn
from the uniform distribution on $[S]$.

\begin{lemma}
\label{lemma:adapt}
Let $\calA$ be any classical testing algorithm
and $p\in \calD_N$ be some distribution such that $\calA$ accepts (rejects) $p$
with probability at least $2/3$ for any oracle $O\, : \, [S]\to [N]$ generating $p$.
Then the corresponding sampling algorithm $\calA_s$ accepts (rejects) $p$
with probability at least $2/3$.
\end{lemma}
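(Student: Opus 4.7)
The plan is to leverage the freedom to take $S$ arbitrarily large, together with a permutation-symmetrization argument that neutralizes the power of adaptive querying. First I would fix an arbitrary oracle $O\, : \, [S]\to [N]$ generating $p$ and consider the family of oracles $\{O\circ \sigma \, : \, \sigma\in \mathrm{Sym}([S])\}$. Every member of this family generates the same distribution $p$, so by hypothesis $\calA$ succeeds on each of them with probability at least $2/3$; averaging, the success probability when $\sigma$ is drawn uniformly at random from $\mathrm{Sym}([S])$ is therefore also at least $2/3$.

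Next I would analyze the transcript of $\calA$ when run on $O\circ \sigma$ for uniformly random $\sigma$, independent of $\calA$'s internal randomness. Assuming without loss of generality that $\calA$ never repeats a query (a repeat returns the same answer and is never more useful than a fresh query), I would argue by induction on the query number that the $k$-th answer returned is $O(\sigma(s_k))$ with $\sigma(s_k)$ uniformly distributed on $[S]\setminus \{\sigma(s_1),\ldots,\sigma(s_{k-1})\}$. The key point is that a uniformly random permutation, conditioned on its values at an arbitrary already-specified set of inputs, remains uniformly random on the complementary inputs, even when $s_k$ is chosen adaptively based on the transcript so far. Consequently, the sequence of answers seen by $\calA$ on $O\circ \sigma$ coincides in distribution with drawing inputs from $[S]$ uniformly \emph{without} replacement and passing them through $O$.

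The sampling algorithm $\calA_s$, by contrast, draws inputs from $[S]$ uniformly \emph{with} replacement. For a fixed query budget $T=T(N)$ and $S\gg T^2$, a standard birthday calculation bounds the total variation distance between with-replacement and without-replacement sampling by $O(T^2/S)$. Since $T$ depends only on $N$, I can choose $S$ as large as needed to make this error arbitrarily small, and couple the two transcripts so that they agree with probability $1-o(1)$; consequently the acceptance probabilities of $\calA$ (averaged over $\sigma$) and of $\calA_s$ differ by at most $o(1)$. Combining this with the averaged lower bound from the previous paragraph gives $\Pr[\calA_s \text{ succeeds on } O]\ge 2/3 - o(1)$, and sending $S\to \infty$ (which is permitted by the definition of testability) yields the claim.

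The main obstacle I would have to pin down carefully is the inductive symmetry claim in the second paragraph, namely that averaging over $\sigma$ converts even an adaptively chosen query sequence into one whose answers are exchangeable and uniform without replacement. This is intuitive but requires being explicit that $\sigma$ is drawn independently of $\calA$'s internal randomness, and that the algorithm's state after $k-1$ queries constrains $\sigma$ only through its values on $s_1,\ldots,s_{k-1}$, leaving its restriction to the remaining inputs uniformly random. Once this symmetry is in hand, the remaining ingredients (averaging over $\sigma$, the birthday-type coupling, and the limit $S\to \infty$) are routine.
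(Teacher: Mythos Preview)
Your proposal is correct and follows essentially the same strategy as the paper: average the success probability over all permutations $\sigma$ of $[S]$, argue that this converts adaptive querying into uniform sampling without replacement, and then use a birthday-type $O(T^2/S)$ bound together with the limit $S\to\infty$ to pass to sampling with replacement. The paper phrases the middle step in terms of the distribution $\bar P$ over query histories (showing $\|\bar P-U\|_1=O(T^2/S)$) rather than answer sequences, but the underlying argument and the key estimates are the same; your plan to make the inductive symmetry claim explicit is exactly the point the paper's compressed proof leaves to the reader.
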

\begin{proof}
 Let $P_{acc}(\sigma)$ be a probability that $\calA$
accepts while interacting with the oracle $O\circ \sigma$, where $\sigma$ is
a permutation on $[S]$.  Without loss of generality $P_{acc}(\sigma)\ge 2/3$
for all $\sigma$. It implies that the average acceptance probability
\be
\label{average_acc}
P_{acc}=\frac1{S!} \sum_{\sigma} P_{acc}(\sigma) \ge \frac23.
\ee
An execution of the algorithm $\calA$ can
be represented by a history of queries $Q=(s_1,\ldots,s_T)\in [S]^{\times T}$.
Let $P(Q)$ be a probability that an execution of $\calA$ leads to a history $Q$.
 We can assume without loss of generality that the output of $\calA$ (accept or reject)
is a deterministic function of $Q$. Let $\Omega_{acc}$ be a set of histories $Q$
that make $\calA$ to accept. We have $P_{acc}(\sigma)=\sum_{Q \in \Omega_{acc}} P(\sigma^{-1} Q)$,
where
\[
\sigma^{-1} Q \equiv (\sigma^{-1}(s_1),\ldots,\sigma^{-1}(s_T)),
\]
 and thus
\[
P_{acc}= \sum_{Q\in \Omega_{acc}}  \frac1{S!} \sum_{\sigma } P(\sigma^{-1} Q)\ge \frac23.
\]
Let $\bar{P}(Q)=\bbE(P(\sigma^{-1} Q))$ where $\sigma$ is drawn from the uniform distribution.
Let $U(Q)$ be the uniform distribution on the set $[S]^{\times T}$.
We claim that
\be
\label{PU_bound}
\| \bar{P} - U \|_1 =O(TS^{-1}).
\ee
Assume without loss of generality that all queries in $Q$ are different.
Then
\[
\bar{P}(Q)=\frac{(S-T)!}{S!} =S^{-T}(1+O(T^2/S)).
\]
A probability that a history drawn from the uniform distribution contains two or more equal queries
can be bounded by $O(T^2/S)$ and thus we arrive at Eq.~(\ref{PU_bound}).
Therefore in the limit $S\to \infty$ the acceptance probability is at least $2/3$
if $Q$ is drawn from the uniform distribution. But this implies that the sampling algorithm $\calA_s$
accepts $p$ with probability at least $2/3$.
\end{proof}

\subsection{Reduction from the Collision Problem to testing Orthogonality}
\label{subs:collision}

One can get lower bounds on the query complexity of testing Orthogonality using the lower bounds
for the Collision problem~\cite{AS-collision-04}. Indeed, let $H\, : \, [N]\to [3N/2]$ be an oracle
function such that either $H$ is one-to-one (yes-instance) or $H$ is two-to-one (no-instance).
The Collision Problem is to decide which one is the case.
It was shown by Aaronson and Shi~\cite{AS-collision-04}
that the quantum query complexity of the Collision problem is $\Omega(N^{1/3})$.
Below we show that the Collision problem can be reduced to testing Orthogonality\footnote{In order to apply the lower bound
proved in~\cite{AS-collision-04} one has to choose the range of $H$ of size $3N/2$ rather than $N$ which would be more natural.}.
It implies that testing Orthogonality requires $\Omega(N^{1/2})$ queries classically and $\Omega(N^{1/3})$ queries
quantumly.

Indeed, choose a random permutation $\sigma\, : \, [N]\to [N]$ and define functions
$O_p,O_q\, : \, [N/2]  \to [3N/2]$ by restricting the composition $H\circ \sigma$ to the subsets
of odd and even integers respectively:
\[
O_p(s)=H(\sigma(2s-1)), \quad O_q(s)=H(\sigma(2s)),\quad s\in [N/2].
\]
 For any yes-instance (i.e. $H$ is one-to-one), the distributions $p,q\in
\calD_{3N/2}$ generated by $O_p$ and $O_q$ are
 uniform distributions on some pair of  disjoint subsets of $[3N/2]$; that is,
$p$ and $q$ are orthogonal.

We need to show that for any no-instance ($H$ is two-to-one) the
distance  $\|p-q\|_1$ takes values
smaller than $2-\epsilon$  with a sufficiently high probability for some
constant $\epsilon$.
\begin{lemma}
\label{lemma:col_low_bound}
Let $H\, : \, [N]\to [3N/2]$ be any two-to-one function. Let $\sigma\, : \, [N]\to [N]$ be
a random permutation drawn from the uniform distribution. Then
\[
\prob{\| p-q\|_1 \le \frac74} \ge \frac12.
\]
\end{lemma}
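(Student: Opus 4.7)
The plan is to reduce $\|p-q\|_1$ to a statistic of the random partition of $[N]$ induced by $\sigma$, and then control that statistic via a second-moment concentration argument.

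First I would unpack the oracles. Since $H$ is two-to-one with range $[3N/2]$, the image $\textrm{Im}(H)$ has exactly $N/2$ elements, and each fiber $H^{-1}(y)=\{a_y,b_y\}$ has size two. The oracle $O_p$ queries $H$ at the odd-position values of $\sigma$ and $O_q$ at the even-position values, so a direct computation gives
\[
p_y=\tfrac{2}{N}\#\{c\in\{a_y,b_y\}:\sigma^{-1}(c)\text{ odd}\},\qquad q_y=\tfrac{2}{N}\#\{c\in\{a_y,b_y\}:\sigma^{-1}(c)\text{ even}\}.
\]
Call $y$ \emph{monochromatic} if $\sigma^{-1}(a_y)$ and $\sigma^{-1}(b_y)$ have the same parity, and \emph{split} otherwise. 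A quick case check gives $|p_y-q_y|=4/N$ for monochromatic $y$ and $|p_y-q_y|=0$ for split $y$, so
\[
\|p-q\|_1=\frac{4}{N}\,Z,\qquad Z:=\#\{y\in\textrm{Im}(H):y\text{ is monochromatic}\}.
\]

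Next I would compute the mean and variance of $Z$. Under a uniform random $\sigma$, the ordered pair $(\sigma^{-1}(a_y),\sigma^{-1}(b_y))$ is uniform over ordered pairs of distinct elements of $[N]$, so the indicator $X_y$ for ``$y$ is monochromatic'' satisfies $\EE(X_y)=(N-2)/(2(N-1))\le 1/2$; by linearity $\EE(Z)\le N/4$, hence $\EE(\|p-q\|_1)\le 1$. For two distinct $y,y'$ an analogous count over four distinct positions shows $\EE(X_y X_{y'})=\EE(X_y)\EE(X_{y'})+O(1/N^2)$, so the covariance is only $O(1/N^2)$. Summing the $N/2$ diagonal variances (each at most $1/4$) with the $O(N^2)$ covariance terms yields $\Var(Z)=O(N)$ and therefore $\Var(\|p-q\|_1)=O(1/N)$.

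Finally I would apply Chebyshev's inequality: since $\EE(\|p-q\|_1)\le 1$,
\[
\prob{\|p-q\|_1\ge 7/4}\le \prob{\l|\|p-q\|_1-\EE(\|p-q\|_1)\r|\ge 3/4}\le \frac{\Var(\|p-q\|_1)}{(3/4)^2}=O(1/N),
\]
which is strictly below $1/2$ once $N$ is sufficiently large, as needed for the asymptotic $\Omega(N^{1/3})$ lower bound. The one delicate step is the covariance bound, since the indicators $X_y$ are correlated through the permutation---the preimage positions must all be distinct---but a routine inclusion-exclusion over the placements of the four relevant preimages confirms the $O(1/N^2)$ decay, and I expect this to be the most tedious part of the full argument.
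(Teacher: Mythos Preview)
Your argument is correct: the identification $\|p-q\|_1=\tfrac{4}{N}Z$ with $Z$ counting same-parity fibers is exactly the reduction the paper uses (their Eq.~(\ref{matching|parity}) is the complementary formulation), and your second-moment computation goes through---the covariance of two indicators is indeed $\Theta(1/N^2)$, so $\Var(Z)=O(N)$ and Chebyshev yields $\prob{\|p-q\|_1\ge 7/4}=O(1/N)$.

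The paper takes a different route after the reduction. Rather than computing moments, it generates the random matching $\calM_\sigma$ sequentially, observing that at each step the chosen vertex is paired with a vertex of opposite parity with probability at least $1/2$; this stochastically dominates the number of opposite-parity pairs by a $\mathrm{Bin}(N/2,1/2)$ variable, and an entropy-form Chernoff bound then shows that fewer than $N/16$ opposite-parity pairs occurs with exponentially small probability. Your Chebyshev approach is more elementary and avoids the coupling, at the cost of a polynomial rather than exponential tail---but since the lemma only asks for probability $\ge 1/2$, that loss is immaterial. In fact your method shows $\|p-q\|_1$ concentrates near its mean $\tfrac{N-2}{N-1}<1$, so you could sharpen the constant $7/4$ to any $1+\Omega(1)$ if that were ever needed.
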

\begin{proof}
Given the promise on $H$ we can
define a perfect matching $\calM$ on the set $[N]$
(considered as a complete graph with $N$ vertices)
such that
$H(u)=H(v)$ iff $u$ and $v$ are matched. Let $\calM_\sigma=\sigma^{-1}\circ \calM$.
Clearly, $\calM_\sigma$ is a random perfect matching on $[N]$ drawn from the uniform distribution
on the set of all perfect matchings. Let  $(u,v)\in \calM_\sigma$ be some pair of matched vertices
and $w=H(\sigma(u))=H(\sigma(v))$. Note that if $u$ and $v$ have different parity then
$p_w=q_w=2/N$. On the other hand, if $u$ and $v$ have the same parity then $p_w=4/N$,
$q_w=0$ or vice verse.  Thus
\be
\label{matching|parity}
\|p-q\|_1 = 2-\frac4{N} \#\{ (u,v)\in \calM_{\sigma} \,: \, \mbox{$u$ and $v$ have different parity}\}.
\ee
A nice property of the uniform distribution on the set of perfect matchings on $[N]$ is that a
conditional distribution given that $(u,v)\in \calM_{\sigma}$ is the uniform distribution
on the set of perfect matchings on $[N]\backslash \{u,v\}$. Thus we can generate $\calM_\sigma$ using the
following algorithm.
Let $U\subseteq [N]$ be the set of all unpaired vertices (in the beginning $U=[N]$). Let $U_{even}$ and $U_{odd}$
be the subsets of all even and all odd integers in $U$.
The algorithm starts from an empty matching $\calM_\sigma =\emptyset$.
Suppose at some step of the algorithm we have some matching $\calM_\sigma$
and some sets of unpaired vertices $U=U_{even}\cup U_{odd}$.
If $|U_{even}|\ge |U_{odd}|$ choose
a random vertex $u\in U_{odd}$.
If $|U_{even}|<|U_{odd}|$ choose a random vertex $u\in U_{even}$.
Pair $u$ with a random vertex $v\in U\backslash \{u\}$ and
update
\[
\calM_\sigma \to \calM_\sigma \cup \{u,v\}, \quad U\to U\backslash \{u,v\}
\]
with the corresponding update for $U_{even}$ and $U_{odd}$.
After $N/2$ steps of the algorithm we generate a random uniform $\calM_\sigma$.

By construction, at each step of the algorithm we pair a vertex $u$ to a vertex $v$ with the opposite parity
with probability at least $1/2$.  Thus the probability $P(k)$ of having a matching $\calM_\sigma$
with less than $k$ pairs having opposite parity is
\[
P(k)\le \sum_{i=0}^k {N/2 \choose i} 2^{-\frac{N}2 + k} \le 2^{\frac{N}{2} \left[ H(x) + x-1 + o(1) \right] },
\]
where $x=2k/N$. One can check that $H(x)+x-1<0$ for $x\le 1/8$ and thus
$P(N/16) \le 1/2$ for sufficiently large $N$.
Thus Eq.~(\ref{matching|parity}) implies that $\|p-q\|_1\le 2-1/4=7/4$ with probability at least $1/2$.
\end{proof}

\subsection{Classical lower bound for testing Uniformity}
\label{subs:ulower}

%SBB:
In this section we prove that classically testing Uniformity requires  $\Omega(N^{1/2})$.
A proof uses the machinery developed by Valiant in~\cite{Val-symmetric-08}. Valiant's techniques
apply to testing {\em symmetric} properties of distributions, that is, properties that are invariant under relabeling of elements in the domain of a distribution. Clearly, Uniformity is a symmetric property.

We shall need two technical tools from~\cite{Val-symmetric-08}, namely, the Positive-Negative Distance lemma and Wishful Thinking theorem (see Theorem~4 and Lemma~3 in~\cite{Val-symmetric-08}).  Let us start from introducing some notations. Let $p\in \calD_N$
be an unknown distribution and $S=(i_1,\ldots,i_M)$ be a list of $M$ independent samples drawn from $p$.
We shall say that $S$ has a collision of order $r$ iff some element $i\in [N]$ appears in $S$ exactly $r$ times.
Let $c_r$ be the total number of collisions of order $r$, where $r\ge 1$. A sequence  of integers $\{c_r\}_{r\ge 1}$ is called a {\em fingerprint} of $S$. Define a probability distribution $D^M_p$ on a set of fingerprints as follows: (1) draw $k$ from the Poisson distribution
${\mathrm {Poi}}(k)=e^{-M} M^k/k!$. (2) Generate a list $S$ of $k$ independent samples drawn from $p$. (3) Output a fingerprint of $S$.

An important observation made in~\cite{Val-symmetric-08} is that a fingerprint  contains all relevant information
about a sample list as far as testing symmetric properties is concerned. Thus without loss of generality, a testing algorithm has to make its decision by looking only on a fingerprint of a sample list. Applying Positive-Negative Distance lemma from~\cite{Val-symmetric-08} to testing Uniformity we get the following result.
\begin{lemma}[\bf \cite{Val-symmetric-08}]
\label{lemma:Val1}
Let $u$ be the uniform distribution on $[N]$ and $p\in \calD_N$ be any distribution such that
$\|p-u\|_1\ge 1$. If for some integer $M$
\be
\label{Val1}
\| D^M_p - D^M_u \|_1 <\frac1{12}
\ee
then Uniformity is not testable in $M$ samples.
\end{lemma}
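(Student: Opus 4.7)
The plan is to derive this lemma as a direct instantiation of Valiant's Positive--Negative Distance Lemma \cite{Val-symmetric-08}, specialized to the Uniformity problem with $p_+=u$ (the uniform distribution on $[N]$, a canonical ``yes'' instance) and $p_-=p$ (the hypothesized ``no'' instance, which satisfies $\|p-u\|_1\ge 1\ge \epsilon$ for any admissible precision $\epsilon\le 1$).

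First I would invoke the standard symmetrization argument. Since Uniformity is invariant under permutations of the underlying set $[N]$, averaging any putative tester's decisions over a uniformly random relabeling $\pi$ of $[N]$ produces a new tester with the same acceptance/rejection probabilities on both $u$ (which is $\pi$-invariant) and $p$, whose decision now depends only on the multiset of samples it observes. Any symmetric function of such a multiset is in turn a function of the fingerprint $\{c_r\}_{r\ge 1}$. So without loss of generality the tester is a randomized function $T$ from fingerprints of $M$ samples into $[0,1]$, giving the acceptance probability.

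Next I would reduce from the fixed-sample-size model to the Poissonized model used in the definition of $D^M_p$. Any tester using exactly $M$ samples can simulate one whose sample count is $k\sim \mathrm{Poi}(M)$ by drawing $k$, using the first $k$ samples if $k\le M$, and aborting on the low-probability event $k>M$. The total variation between the fixed-$M$ and Poissonized-$M$ sampling models is a small absolute constant which is absorbed into the slack between the desired confidence gap $2/3-1/3=1/3$ and the hypothesized threshold $1/12$; this accounting is precisely what fixes the constant $1/12$ in the statement. Finally, the distinguishing step is the variational characterization of total variation: the induced tester $T$ satisfies $|\mathbb{E}_{D^M_u}[T]-\mathbb{E}_{D^M_p}[T]| \le \|D^M_u-D^M_p\|_1$, so if the right-hand side is below $1/12$ then even after subtracting the Poissonization slack the gap is too small for $T$ to accept $u$ with probability $\ge 2/3$ while accepting $p$ with probability $\le 1/3$. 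Consequently no valid $M$-sample tester can exist, which is exactly the statement that Uniformity is not testable in $M$ samples.

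The symmetrization and variational-distance steps are entirely routine; the only genuinely delicate part is verifying that the Poissonization bookkeeping lands at the stated constant $1/12$. Since this is precisely the content of Theorem~4 and Lemma~3 of \cite{Val-symmetric-08}, the cleanest presentation is simply to quote those results rather than redo the calculation.
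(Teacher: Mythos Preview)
Your proposal is correct and matches the paper's treatment: the paper does not give an independent proof of this lemma but simply states it as a direct specialization of Valiant's Positive--Negative Distance Lemma (Theorem~4 in \cite{Val-symmetric-08}) to the Uniformity property, exactly as you do. Your additional sketch of the symmetrization and Poissonization mechanics is helpful context, but the substantive content---quoting Valiant---is identical.
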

The second technical tool is a usable upper bound on the distance between the distributions of fingerprints.
For any integer $k$ define an $k$-th moment of $p$ as
\be
\label{Val2}
m_k(p)=\sum_{i=1}^N p_i^k.
\ee
Clearly $m_k(u)=N^{1-k}$ which is the smallest possible value of a $k$-th moment for distributions on $[N]$.
Applying Wishful Thinking theorem from~\cite{Val-symmetric-08} to testing Uniformity we get the following result.
\begin{lemma}[\bf \cite{Val-symmetric-08}]
Let $p\in \calD_N$ be any distribution such that $\|p\|_\infty \le \delta/M$ for some $\delta>0$. Then
\be
\label{Val3}
\| D^M_p - D^M_u\|_1 \le 40 \delta + 10 \sum_{k\ge 2} M^k\, \frac{ m_k(p)-N^{1-k}}{\lfloor{k/2}\rfloor! \sqrt{1+ M^k\, m_k(p)}}.
\ee
\end{lemma}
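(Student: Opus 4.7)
The plan is to follow Valiant's Poissonization-plus-moment-matching strategy, specialized to the case where one distribution is uniform. The first step is to replace the fixed sample size $M$ with a Poisson-distributed size $\mathrm{Poi}(M)$ (this is exactly how $D_p^M$ is defined in the paragraph preceding \lemref{Val1}), which makes the counts $X_i = \#\{a : i_a = i\}$ \emph{independent} with $X_i \sim \mathrm{Poi}(Mp_i)$. The fingerprint $(c_r)_{r \ge 1}$ with $c_r = |\{i : X_i = r\}|$ is then a sum of independent contributions, one per element, and $\| D_p^M - D_u^M\|_1$ reduces to the total variation between two permutation-symmetrized product measures on $\ZZ_{\ge 0}^N$.

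Next I would control this total variation by a chi-squared/Hellinger bound applied element by element, so that the distance decomposes into a sum over $i\in[N]$ of single-coordinate divergences $\chi^2(\mathrm{Poi}(Mp_i)\,\|\,\mathrm{Poi}(M/N))$. Using the explicit generating function of Poisson laws one obtains an expansion in the differences $Mp_i - M/N$, which, after summing over $i$, assemble precisely into $M^k\bigl(m_k(p)-N^{1-k}\bigr)$ for each $k\ge 2$. The hypothesis $\|p\|_\infty \le \delta/M$ guarantees $Mp_i\le \delta$ uniformly, so that the Taylor remainders are well-behaved and the Poissonization error (plus the contribution of elements with $Mp_i \asymp 1$) can be absorbed into the additive $40\delta$ term.

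The hard part, and the technical core of Valiant's proof, is producing the precise denominator $\lfloor k/2\rfloor!\sqrt{1+M^k m_k(p)}$. The $\lfloor k/2\rfloor!$ comes from the factorial suppression of Poisson cumulants: when expanding the logarithm of the Poisson generating function and pairing up terms, only pairs of size roughly $k/2$ survive, contributing a $1/\lfloor k/2\rfloor!$ normalization. The $\sqrt{1+M^k m_k(p)}$ arises from a Cauchy--Schwarz step that replaces a sum of products $\sum_i (Mp_i)^a(Mp_i - M/N)^b$ by the square root of $\sum_i (Mp_i)^k$ times a moment-difference factor, the ``$1+$'' coming from always including a baseline variance contribution that prevents the bound from blowing up when $m_k(p)$ is tiny.

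Once these three ingredients are in place, summing over $k\ge 2$ yields the stated inequality. I expect no conceptual obstacle beyond those already in Valiant's argument; the real work is the careful bookkeeping of constants to certify the numbers $40$ and $10$, which in our application (see the subsequent classical lower bound derivation) only need to be some explicit $O(1)$. Accordingly, I would be content to cite Valiant's Wishful Thinking theorem essentially verbatim, verifying only that the hypothesis $\|p\|_\infty \le \delta/M$ is preserved under the specialization $q = u$ and that the uniform moments $m_k(u) = N^{1-k}$ produce exactly the form of the numerator written above.
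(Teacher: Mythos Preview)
Your proposal is aligned with the paper, which does not prove this lemma at all: it is stated with attribution to \cite{Val-symmetric-08} (Valiant's Wishful Thinking theorem) and used as a black box. Your final paragraph---citing Valiant essentially verbatim after checking that the specialization $q=u$ yields $m_k(u)=N^{1-k}$ in the numerator---is precisely what the paper does, and the preceding sketch of Valiant's Poissonization and moment-matching argument is additional exposition the paper omits.
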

%SBB:
\begin{corol}
Uniformity is not testable classically in $32^{-1}\,  N^{1/2}$ queries.
\end{corol}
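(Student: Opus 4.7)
The plan is to exhibit a single $\epsilon$-nonuniform distribution $p\in\calD_N$ that is indistinguishable from $u$ in $M=N^{1/2}/32$ samples, and then invoke the two lemmas of Valiant quoted above. A natural candidate is the distribution supported on half the domain with doubled density:
\be
 p_i=\frac{2}{N} \text{ for } i\in[N/2], \qquad p_i=0 \text{ for } i>N/2.\nn
\ee
Then $\|p-u\|_1=(N/2)(2/N-1/N)+(N/2)(1/N)=1$, so the hypothesis of the Positive--Negative Distance lemma is satisfied. It therefore suffices to upper bound $\|D^M_p-D^M_u\|_1$ by $1/12$ for $M=N^{1/2}/32$ via the Wishful Thinking theorem.

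First I would verify that $\|p\|_\infty=2/N$, so we may take $\delta=2M/N=1/(16\sqrt{N})$ in \eq{Val3}; the prefactor $40\delta$ is $O(N^{-1/2})$ and is therefore negligible. Next I would compute the moments explicitly: $m_k(p)=(N/2)(2/N)^k=2^{k-1}N^{1-k}$ while $m_k(u)=N^{1-k}$, hence
\be
 M^k\bigl(m_k(p)-N^{1-k}\bigr)=(2^{k-1}-1)\l(\frac{M}{N}\r)^{\!k}N = (2^{k-1}-1)\,32^{-k}\,N^{1-k/2}. \nn
\ee
For $k=2$ this equals $1/1024$; for $k\ge 3$ the factor $N^{1-k/2}\le N^{-1/2}$ makes the contribution vanish as $N\to\infty$. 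The extra factorial $\lfloor k/2\rfloor!$ in the denominator of \eq{Val3} keeps the tail summable even before using the smallness in $N$. A one-line geometric-series argument then shows that the entire sum in \eq{Val3} is bounded by $10/1024+o(1)<1/12$ for $N$ sufficiently large.

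With $\|D^M_p-D^M_u\|_1<1/12$ in hand, \lemref{Val1} immediately rules out any tester using $M=N^{1/2}/32$ queries. I anticipate that the only delicate step is the $k=2$ bound, since this is the sole term whose contribution is a constant (independent of $N$) rather than vanishing; the constant $32$ in $M=N^{1/2}/32$ is chosen precisely to make this constant comfortably below $1/12$ after including the factor of $10$ from \eq{Val3}. The tail $k\ge 3$ and the prefactor $40\delta$ are both asymptotically negligible, so handling them is a routine estimate rather than the main obstacle.
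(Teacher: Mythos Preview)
Your proposal is correct and follows the paper's approach exactly: the same half-supported distribution, the same moment computation $m_k(p)=2^{k-1}N^{1-k}$, and the same appeal to the two Valiant lemmas; the paper bounds all terms $k\ge 2$ uniformly by $M^k m_k(p)\le 2^{-2a+1}$ (with $M=2^{-a}N^{1/2}$) and then sums $\sum_{k\ge 2}1/\lfloor k/2\rfloor!\le 4$ rather than singling out $k=2$, but this is purely cosmetic. One small point you omit: Lemma~\ref{lemma:Val1} gives a \emph{sampling} lower bound, and the paper finishes by invoking Lemma~\ref{lemma:adapt} to transfer this to the oracle \emph{query} model in which the corollary is stated.
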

\begin{proof}
Consider a distribution
\[
p_i=\left\{ \ba{rcl} 2/N &\mbox{if} & 1\le i\le N/2, \\
0 && \mbox{otherwise}. \\
\ea
\right.
\]
Clearly $\|p-u\|_1=1$ and
\[
m_k(p)=2^{k-1} N^{1-k}.
\]
In particular, choosing $M=2^{-a}N^{1/2}$  we have
\[
M^k m_k(p)=2^{-k(a-1) -1}\, N^{1-\frac{k}2}\le 2^{-2a+1} \quad \mbox{for all $k\ge 2$}.
\]
Taking into account that
\[
\sum_{k\ge 2} \frac1{\lfloor{k/2}\rfloor!} \le 2(e-1)\le 4
\]
we can use Eq.~(\ref{Val3}) to infer that
\be
\label{Val4}
\| D^M_p - D^M_u\|_1 \le 40 \delta + 10\cdot 2^{-2a+3}.
\ee
Clearly, condition $\|p\|_\infty \le \delta/M$ can be satisfied for any constant $\delta>0$ and sufficiently large $N$.
Then Lemma~\ref{lemma:Val1} implies that Uniformity is not testable in $M$ samples
whenever $10\cdot 2^{-2a+3}<1/12$. It suffices to choose $a=5$.
Finally, Lemma~\ref{lemma:adapt} implies that Uniformity is not testable in $M$ queries in the oracle model.
\end{proof}

\vspace{5mm}

\noindent {\bf \large Acknowledgments}\\
We are grateful to Ronald de Wolf for numerous comments that helped to
improve the paper.  We would like to thank Sourav Chakraborty for
informing us about the results in~\cite{CFMW09}.  S.B.
thanks CWI for hospitality while this work was being done and gratefully
acknowledges financial  support from the DARPA QUEST program under contract
no. HR0011-09-C-0047.  A.W.H. was funded by EPSRC grant ``QIP IRC''
and is grateful to IBM and MIT for their hospitality while this work
was being done. A.H. received support from the xQIT Keck fellowship.

\bibliographystyle{abbrv}
%\bibliographystyle{plain}
%\bibliography{proptest}

\end{document}